







\documentclass[twocolumn]{autart}    

\usepackage{graphicx}          

\usepackage{cite}
\usepackage{amsmath,amssymb,amsfonts}
 \newenvironment{proof}{\emph{Proof:}}{\hspace{\stretch{1}}\rule{1ex}{1ex}}  
\usepackage{algorithm, algorithmic}
\usepackage{textcomp}
\usepackage{float}
\usepackage{subfigure}
\usepackage{color}
\usepackage[justification=centering]{caption}
\newtheorem{theorem}{Theorem}

\newtheorem{assumption}{Assumption}

\newtheorem{lemma}{Lemma}

\newtheorem{remark}{Remark}

\usepackage{ragged2e}

\begin{document}

\begin{frontmatter}

\title{Recursive Projection-Free Identification with Binary-Valued Observations\thanksref{footnoteinfo}} 

\thanks[footnoteinfo]{The work is supported by National Natural Science Foundation of China under Grants 62025306, 62303452 and T2293773, CAS Project for Young Scientists in Basic Research under Grant YSBR-008, China Postdoctoral Science Foundation under Grant 2022M720159.Corresponding author: Yanlong Zhao.}

\author[2,3]{Tianning Han}\ead{hantianning23@mails.ucas.ac.cn},    
\author[2]{Ying Wang}\ead{wangying96@amss.ac.cn},               
\author[2,3]{Yanlong Zhao*}\ead{ylzhao@amss.ac.cn}  

 \address[2]{Key Laboratory of Systems and Control, Academy of Mathematics and Systems Science,
 	Chinese Academy of Sciences, Beijing~100190, P.~R.~China}                                             
\address[3]{School of Mathematical Sciences, University of Chinese Academy of Sciences, Beijing~100049, P.~R.~China}             
    

\begin{keyword}                           
Projection-free identification, binary-valued observations, computational complexity, asymptotic efficiency.              
\end{keyword}                             

\begin{abstract}                         
This paper is concerned with parameter identification problem for finite impulse response (FIR) systems with binary-valued observations under low computational complexity. Most of the existing algorithms under binary-valued observations rely on projection operators, which leads to a high computational complexity of much higher than $O\left(n^2\right)$. In response, this paper introduces a recursive projection-free identification algorithm that incorporates a specialized cut-off coefficient to fully utilize prior information, thereby eliminating the need for projection operators. The algorithm is proved to be mean square and almost surely convergent. Furthermore, to better leverage prior information, an adaptive accelerated coefficient is introduced, resulting in a mean square convergence rate of $O\left(\frac{1}{k}\right)$, which matches the convergence rate with accurate observations. Inspired by the structure of the $\mathrm{Cram\acute{e}r}$-Rao lower bound, the algorithm can be extended to an information-matrix projection-free algorithm by designing adaptive weight coefficients. This extension is proved to be asymptotically efficient for first-order FIR systems, with simulations indicating similar results for high-order FIR systems. Finally, numerical examples are provided to demonstrate the main results.
\end{abstract}

\end{frontmatter}

\section{Introduction}
\label{sec:introduction}
With the advancement of information technology, biotechnology, and big data, a new class of systems known as set-valued systems has emerged \cite{1245179}. In these systems, the output cannot be accurately measured; only whether they belong to certain known sets can be determined. For instance, due to measurement limitations, sensor network data often result in set-valued observations with a finite number of bits, or even just 1 bit \cite{dargie2010fundamentals, huang2017efficient, tei2014model}. This means that each sensor only indicates whether the measured value exceeds a threshold. In biological systems, we can only ascertain whether a neuron is in an excited or inhibited state \cite{ghysen2003origin}. When the potential is below the threshold, the neuron shows inhibition; otherwise, it shows excitation. Additionally, gas content sensors in the gas and oil industry \cite{sun2004aftertreatment}, switching sensors for shift-by-wire in automotive applications \cite{1383873}, and traffic condition indicators in asynchronous transmission mode networks \cite{schwiebert2001robust} all operate with set-valued observations.

As mentioned above, the increasingly important role of set-valued systems is recognized in practical applications. Unlike traditional systems with accurate observations, set-valued measurements offer minimal information and exhibit high nonlinearity. Consequently, conventional identification methods like least squares, Kalman filtering, and other classical approaches are ill-suited for set-valued systems. Thus, there arises a necessity to develop identification methods tailored for such systems.

Over the past two decades, since the inception of set-valued systems, extensive and profound research has been undertaken, yielding numerous significant results\cite{papadopoulos2001sequential,zhang2013quantized,gustafsson2009statistical,le2010system,godoy2011identification,song2018recursive,ge2017adaptive}. Parameter identification with set-valued observations is one of the most important topics about set-valued systems \cite{1245179}. Plenty of identification algorithms have been proposed and they can be classified into two parts according to whether using projection operators. 

Without using the projection operators, there are a series of achievements
concerning system identification with binary observations. For instance, an empirical measure (EM) method without projection operators is introduced in \cite{1245179} for finite impulse response (FIR) systems with periodic inputs, demonstrating convergence and asymptotic efficiency. In \cite{WANG20071178}, an optimal convex combination estimation of the EM algorithm is constructed under multi-threshold measurements, with proven asymptotic efficiency, but requiring uniform boundedness of the inverse of the distribution function. Guo et al. present an EM algorithm in \cite{guo2015asymptotically} and extend it to systems with quantized inputs and outputs, with available key convergence properties under periodic and full-rank inputs. 
A cyclic projection-free identification algorithm based on binary observations is proposed in \cite{5717798}, though its convergence is demonstrated solely through numerical simulations without theoretical proof. Under conditions of independently and identically distributed inputs, \cite{6426238} introduces a low-storage and low-complexity recursive identification algorithm, establishing theoretical analysis in noise-free systems. Also under independently and identically distributed input conditions, You utilizes the design of an adaptive quantizer to propose a stochastic approximation-type recursive estimation algorithm in \cite{you2015recursive} and establishes the convergence of the estimation and the asymptotic efficiency of the algorithm. 
The algorithms proposed in these works can effectively handle the parameter identification problem of set-valued systems, but some exhibit input constraints and others lack desirable convergence properties.

Progress in identification algorithms employing projection operators with set-valued measurements has also been notable. For instance, \cite{guo2013recursive} presents a recursive projection algorithm, proving its mean square and almost sure convergence under persistent excitation (PE) conditions, with convergence rates of $O(\frac{\log k}{k})$. Further advancements are made in \cite{8623153}, establishing that the mean square convergence rate of the recursive projection algorithm under PE conditions can reach $O\left(\frac{1}{k}\right)$, highlighting its dependency on true parameters rather than prior information. 
Additionally, \cite{wang2022unified} investigates a unified sign-error type algorithm with projection, demonstrating mean square and almost sure convergence to the true parameter, with the mean square convergence rate of estimation error at $O\left(\frac{1}{k}\right)$. An online Quasi-Newton type algorithm with time-varying projection is proposed in \cite{ZHANG2022110158}, with established almost sure convergence under a signal condition weaker than traditional PE conditions. In \cite{10247592}, a two-step Quasi-Newton algorithm with time-varying projection is introduced to enhance identification performance, with proven global convergence and asymptotic normality. Furthermore, \cite{wang2023asymptotically} constructs an information-based projection algorithm, proving its asymptotic efficiency under certain noise constraints. However, despite their impressive convergence properties, algorithms employing projection operators often suffer from high computational complexity.


Based on the above discussion, it can be observed that current identification algorithms for set-valued systems still have significant potential for further improvement. this paper is concerned with the parameter identification problem for FIR systems with binary-valued observations under low computational complexity. The main contributions of this paper can be summarized as follows:



\begin{itemize}
	\item This paper focuses on designing projection-free identification algorithms for FIR systems with binary-valued observations that operate under low computational complexity. It is worth noting that projection operators ensure good convergence properties by fully utilizing the prior information of unknown parameters. Thus, the main challenge is to leverage this prior information while eliminating the need for projection operators. More specifically, we ensure that the estimation error of the algorithm gradually decreases during the iterative process while removing the projection operators. To address this, we design a cut-off coefficient to limit the product of estimate and regressor when it deviates from the bound of their prior information, and an adaptive accelerated coefficient to increase the step size. Consequently, we can eliminate the projection operators while maintaining good convergence properties.
	
	
	\item This paper proposes a recursive projection-free identification (RPFI) algorithm for binary-valued observations, incorporating a cut-off coefficient and an adaptive accelerated coefficient. Compared to the existing algorithms in\cite{guo2013recursive,8623153,guo2014identification, zhang2019asymptotically,wang2022unified,10247592,wang2023asymptotically}, the proposed algorithm eliminates the projection operator, thereby significantly reducing computational complexity. Both the mean square and almost surely convergences of the algorithm are established. Furthermore, the mean square convergence rate is proved to be $O\left(\frac{1}{k}\right)$ for high-order FIR systems, a result that, in \cite{zhang2019asymptotically}, could only be numerically simulated and lacked theoretical proof.
	
	\item
	Inspired by the recursive form of the CR lower bound, an information-matrix projection-free (IMPF) algorithm is constructed based on designing proper weight coefficients of the RPFI algorithm. The IMPF algorithm is proved to be asymptotically efficient for first-order FIR systems, which can not be obtained for the algorithm in \cite{Ke2022RecursiveIO}. By numerical simulations, the asymptotic efficiency is also suitable for high-order FIR systems. This paper also illustrates that the computational complexity of the algorithm proposed is much lower than other asymptotically efficient algorithms in \cite{zhang2019asymptotically,10247592,wang2023asymptotically} by numerical simulations.

\end{itemize}

The remainder of this paper is organized as follows. Section 2 describes the identification problem of FIR system with binary-valued observations. The RPFI algorithm is introduced in Section 3, where its convergence properties are studied. In Section 4, the IMPF algorithm is constructed and it is proved to be asymptotically efficient for first-order FIR systems. Three numerical examples are illustrated in Section 5 to show the main results. Section 6 gives the concluding remarks and related future works.

\section{Problem formulation}
\subsection{System Description}
Consider the FIR system
\begin{gather}\label{M}
	y_k=\phi_k^T\theta+d_k, k\geq0,
\end{gather}
where $\phi_k\in \mathbb{R}^n$, $\theta\in \mathbb{R}^n$, and $d_k\in \mathbb{R}$ are the system input, unknown but constant parameter, and noise at time $k$, respectively. The system output $y_k$ can not be accurately measured, but its binary-valued observation can be obtained, i.e.,
\begin{equation}\label{moxingtwo}
	\begin{aligned}
		s_k=I_{\{y_k\leq C\}}=
		\left\{
		\begin{aligned}
			1&, & y_k\leq C,\\
			0&, & y_k> C.\\
		\end{aligned}
		\right.\\
	\end{aligned}
\end{equation}
where $C$ is a known constant threshold and $I_{\{\cdot\}}$ is the indicator function.

\subsection{Assumption}
In order to proceed our analysis, we introduce some assumptions concerning the unknown parameter, the random noise and the regressor.

\begin{assumption}\label{A1}
	The unknown parameter $\theta$ belongs to a bounded set with the boundedness $\bar{\theta}$, i.e. $\|\theta\|\leq\bar{\theta}$.
\end{assumption}

\newtheorem{assumption4}{\bf Assumption}
\begin{assumption}\label{A2}
	$\{d_k,k\geq1\}$ is an independent and identically distributed (i.i.d.) stochastic normal sequence with zero mean and known covariance $\sigma^2$. The probability distribution function and probability density function of $d_k$ are denoted as $F(\cdot)$ and $f(\cdot)$, respectively.
\end{assumption}

\newtheorem{assumption2}{\bf Assumption}
\begin{assumption}\label{A3}
	The input sequence $\{\phi_k\}$ satisfies $\sup_{k\geq 1}\Vert\phi_k\Vert\triangleq\bar{\phi}<\infty$. Besides, there exists a positive integer N and a positive constant $\delta$ such that 
	\begin{gather*}
		\frac{1}{N}\sum_{i=k}^{k+N-1}\phi_i\phi_i^T\geq\delta^2I_n, \forall k\geq1.
	\end{gather*}
	
\end{assumption}

\begin{remark}\label{n0}
	Assumption 2 can be generalized into the symmetrically distributed noise whose probability distribution function is second-order continuous derivable with mean 0 and known covariance.
\end{remark}

	Actually, if the prior information of unknown parameter $\theta\in\Theta$ is an affine set, the projection operator is equivalent to multiplying by an n-dimensional projection matrix\cite{Boyd_Vandenberghe_2004}, whose computational complexity is $O\left(n^2\right)$. However, $\Theta$ in \cite{guo2013recursive,8623153,guo2014identification,wang2023asymptotically, zhang2019asymptotically,wang2022unified,10247592} is bounded convex compact set, which means the prior information set is much smaller and the projection operator is more complex. Therefore, the computational complexity of using the projection operator is much higher than $O\left(n^2\right)$. Hence this paper focus on designing projection-free identification algorithms for FIR systems with binary-valued observations to reduce the computational complexity.

\section{The RPFI algorithm and its convergence properties}
This section will establish a projection-free identification algorithm with binary valued observations and its convergence properties. The idea of the algorithm design is as follows. As mentioned above, the projection operators bring high computational complexity. But because it make full use of the prior information of unknown parameters, the projection operators ensure good convergence properties. Therefore, an important challenge is how to leverage prior information effectively while reducing the high computational complexity introduced by projection. Through utilizing prior information of unknown parameter, we design a cut-off coefficient 
\begin{align*}
	z_k=
	\left\{
	\begin{aligned}
		&M, & \phi_k^T\hat{\theta}_{k-1}&>M;\\
		&\phi_k^T\hat{\theta}_{k-1}, & \phi_k^T\hat{\theta}_{k-1}&\in[-M,M];\\
		&-M, & \phi_k^T\hat{\theta}_{k-1}&<-M,
	\end{aligned}
	\right.
\end{align*}	
to remove the projection operators and introduce an adaptive accelerated coefficient 
\begin{align*}
	\gamma_k=
	\left\{
	\begin{aligned}
		&1, & |\phi_k^T\hat{\theta}_{k-1}|&\leq M;\\
		&|\phi_k^T\hat{\theta}_{k-1}|, & |\phi_k^T\hat{\theta}_{k-1}|&>M,
	\end{aligned}
	\right.
\end{align*}
to maintain high convergence rate, where $M=\bar{\phi}\bar{\theta}+2$. Therefore,  we propose the following RPFI algorithm.

\begin{algorithm}[htb]
	\caption{The RPFI Algorithm}
	\label{Algorithm}
	For the arbitrary initial value $\hat{\theta}_0\in\mathbb{R}^n$, the algorithm is recursively defined at any $k\geq0$ as follows:
	
	Step 1: Update of the cut-off coefficient:
	\begin{align}\label{eq1.1}
		\begin{aligned}
			z_k=&
			\left\{
			\begin{aligned}
				&M, & \phi_k^T\hat{\theta}_{k-1}&>M;\\
				&\phi_k^T\hat{\theta}_{k-1}, & \phi_k^T\hat{\theta}_{k-1}&\in[-M,M];\\
				&-M, & \phi_k^T\hat{\theta}_{k-1}&<-M,
			\end{aligned}
			\right.
		\end{aligned}
	\end{align}
	and the accelerated coefficient
	\begin{align}\label{eq1.2}
		\begin{aligned}
			\gamma_k=&
			\left\{
			\begin{aligned}
				&1, & |\phi_k^T\hat{\theta}_{k-1}|&\leq M;\\
				&|\phi_k^T\hat{\theta}_{k-1}|, & |\phi_k^T\hat{\theta}_{k-1}|&>M,
			\end{aligned}
			\right.
		\end{aligned}
	\end{align}
	where $M=\bar{\phi}\bar{\theta}+2$.
	
	Step 2: Estimation:
	\begin{align}\label{eq1}
		\left\{
		\begin{aligned}
			\hat{\theta}_{k}=&\hat{\theta}_{k-1}+\frac{\gamma_k\alpha_k\phi_k}{r_k}(F(C-z_k)-s_k),\\
			r_k=&1+\sum_{i=1}^{k}\beta_i\phi_i^T\phi_i,
		\end{aligned}
	\right.
	\end{align}
	where $\alpha_k>0$ and $\beta_k>0$ are the step size coefficients.

	
\end{algorithm}
\begin{remark}\label{n1}
	The step size coefficients $\alpha_k$ and $\beta_k$ are designed to adjust the convergence rate. The case for $\alpha_k=\beta$ and $\beta_k=1$ is just the case in the recursive projection algorithm in \cite{guo2013recursive}. 	
\end{remark}


%

\newtheorem{lemma21}{Lemma}[section]

Next, we will establish the convergence and convergence rate of the RPFI algorithm. For the simplicity of description, denote 
\begin{align}\label{fandF}
	\begin{aligned}
		F_k=&F(C-\phi_k^T\theta), \hat{F}_k=F(C-z_k), \\
		f_k=&f(C-\phi_k^T\theta), \hat{f}_k=f(C-z_k).
	\end{aligned}
\end{align}

\begin{assumption}\label{A4}
	The coefficients $\alpha_k$ and $\beta_k$ satisfy\\ $0<\underline{\alpha}\leq\alpha_k\leq\bar{\alpha}<\infty$ and $0<\underline{\beta}\leq\beta_k\leq\bar{\beta}<\infty$.
\end{assumption}

\begin{theorem}\label{theorem1}
	If Assumptions \ref{A1}-\ref{A4} hold, the RPFI algorithm is both mean-square and almost surely convergent, i.e.,
	\begin{equation*}
		\lim\limits_{k\to\infty}E\Vert\widetilde{\theta}_k\Vert^2=0, \quad
		\lim\limits_{k\to\infty}\widetilde{\theta}_k=0, a.s.,
	\end{equation*}
	where $\widetilde{\theta}_k=\hat{\theta}_k-\theta$ is the estimation error.
\end{theorem}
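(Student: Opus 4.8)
The plan is to analyze the recursive evolution of the squared estimation error $\|\widetilde{\theta}_k\|^2$ and show it forms a stochastic sequence to which a convergence lemma (such as Robbins-Siegmund) can be applied. First I would substitute the update equation \eqref{eq1} into $\widetilde{\theta}_k=\hat{\theta}_k-\theta$ and expand $\|\widetilde{\theta}_k\|^2$. This yields
\begin{equation*}
\|\widetilde{\theta}_k\|^2=\|\widetilde{\theta}_{k-1}\|^2+\frac{2\gamma_k\alpha_k}{r_k}\widetilde{\theta}_{k-1}^T\phi_k(\hat{F}_k-s_k)+\frac{\gamma_k^2\alpha_k^2}{r_k^2}\|\phi_k\|^2(\hat{F}_k-s_k)^2.
\end{equation*}
The key is to take conditional expectation with respect to the natural filtration $\mathcal{F}_{k-1}$. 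Since $s_k=I_{\{y_k\le C\}}$ and $y_k=\phi_k^T\theta+d_k$ with $d_k$ independent of $\mathcal{F}_{k-1}$, I would use $E[s_k\mid\mathcal{F}_{k-1}]=F(C-\phi_k^T\theta)=F_k$, so the cross term becomes $\frac{2\gamma_k\alpha_k}{r_k}\widetilde{\theta}_{k-1}^T\phi_k(\hat{F}_k-F_k)$.

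The crucial sign analysis comes next. I would show that $\widetilde{\theta}_{k-1}^T\phi_k(\hat{F}_k-F_k)\le 0$, so that the cross term is a nonpositive drift. Here the cut-off coefficient $z_k$ and the threshold $M=\bar\phi\bar\theta+2$ are essential: by Assumption \ref{A1} we have $|\phi_k^T\theta|\le\bar\phi\bar\theta<M$, so $C-\phi_k^T\theta$ always lies in the active region, while $\hat{F}_k=F(C-z_k)$ uses the saturated value $z_k$. Because $F$ is monotone decreasing (as a distribution function) and $z_k$ is the projection of $\phi_k^T\hat\theta_{k-1}$ onto $[-M,M]$, I expect the sign of $\hat F_k-F_k$ to match that of $-(z_k-\phi_k^T\theta)$, which in turn aligns with $-\widetilde{\theta}_{k-1}^T\phi_k$ when $\phi_k^T\hat\theta_{k-1}\in[-M,M]$ (where $z_k=\phi_k^T\hat\theta_{k-1}$ exactly). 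The nontrivial case is when saturation occurs, i.e. $|\phi_k^T\hat\theta_{k-1}|>M$; here I would verify that the accelerated coefficient $\gamma_k=|\phi_k^T\hat\theta_{k-1}|$ together with the cut-off still produces a negative drift, because $\phi_k^T\hat\theta_{k-1}$ and $\widetilde{\theta}_{k-1}^T\phi_k=\phi_k^T\hat\theta_{k-1}-\phi_k^T\theta$ share the same sign when the former exceeds $M$ in magnitude. I would then bound $(\hat F_k-F_k)$ below in absolute value by a linear function of $|\widetilde\theta_{k-1}^T\phi_k|$, using a mean-value argument and the positive lower bound of the density $f$ on the relevant compact interval, to obtain a drift term of the form $-\frac{c\,\gamma_k\alpha_k}{r_k}(\widetilde\theta_{k-1}^T\phi_k)^2$.

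With the nonpositive drift established and the remainder (noise) term controlled by $E[(\hat F_k-s_k)^2\mid\mathcal{F}_{k-1}]\le 1$ together with summability of $\sum\gamma_k^2\alpha_k^2\|\phi_k\|^2/r_k^2$ (which follows since $r_k$ grows linearly under the PE condition in Assumption \ref{A3} and the coefficients are bounded by Assumption \ref{A4}), I would invoke a martingale convergence argument to conclude that $\|\widetilde\theta_k\|^2$ converges almost surely and that $\sum_k\frac{\gamma_k\alpha_k}{r_k}(\widetilde\theta_{k-1}^T\phi_k)^2<\infty$ a.s. Finally, I would combine this summability with the persistent excitation condition (Assumption \ref{A3}), which guarantees that the regressors $\phi_k$ span $\mathbb{R}^n$ over every window of length $N$, to upgrade convergence of the quadratic forms into convergence of $\widetilde\theta_k$ itself to zero; taking expectations and using uniform integrability (the error stays bounded because $z_k$ and $\gamma_k$ keep the increments controlled) yields the mean-square statement. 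The main obstacle I anticipate is the saturation case of the sign analysis: verifying rigorously that the jointly designed cut-off coefficient $z_k$ and accelerated coefficient $\gamma_k$ preserve the negative-drift property precisely when the projection operator would otherwise have been invoked is the technical heart of replacing projection by these coefficients.
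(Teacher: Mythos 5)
Your skeleton matches the paper's proof in outline (expand $\Vert\widetilde{\theta}_k\Vert^2$, condition on $\mathcal{F}_{k-1}$ so that $E[s_k|\mathcal{F}_{k-1}]=F_k$, apply the mean-value theorem, exploit sign agreement between $\phi_k^T\widetilde{\theta}_{k-1}$ and $\widetilde{z}_k$ in the saturation region, then a PE window argument plus a supermartingale convergence lemma), but there is a genuine gap at the step you treat as routine: the summability of $\sum_k\gamma_k^2\alpha_k^2\Vert\phi_k\Vert^2/r_k^2$. You justify it by saying $r_k$ grows linearly and "the coefficients are bounded by Assumption \ref{A4}," but $\gamma_k$ is not a bounded coefficient: in the saturation case $\gamma_k=|\phi_k^T\hat{\theta}_{k-1}|$, and since the algorithm has no projection, $\hat{\theta}_{k-1}$ --- hence $\gamma_k$ --- has no a priori bound. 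Controlling exactly this quantity is the central technical difficulty of the projection-free setting. The paper resolves it by first running the recursion \emph{in expectation}: the negative drift forces $\sum_i E|\phi_{i+1}^T\widetilde{\theta}_i|^2/r_{i+1}<\infty$ (see (\ref{phitheta})), and since $E[\gamma_k^2]\leq 1+2E|\phi_k^T\widetilde{\theta}_{k-1}|^2+2(\bar{\phi}\bar{\theta})^2$ by (\ref{gamma}), this yields the uniform bound (\ref{gammaD_1}), which is what makes $E\sum_k\gamma_k^2/r_k^2<\infty$; only then can Lemma \ref{lemmaconvergence} be invoked. Your reversed order (almost sure convergence first, then mean square "by uniform integrability") inherits the problem twice: the Robbins--Siegmund noise term is not summable without a moment bound on $\gamma_k^2$ (unless you absorb $\gamma_k^2\leq 1+2\bar{\phi}^2\Vert\widetilde{\theta}_{k-1}\Vert^2+2(\bar{\phi}\bar{\theta})^2$ into a multiplicative $(1+a_k)V_{k-1}$ term of the full Robbins--Siegmund theorem, which you never mention), and uniform integrability of $\Vert\widetilde{\theta}_k\Vert^2$ does not follow from "increments are controlled" --- the increments are proportional to $\gamma_k/r_k$, again unbounded, so this would require a higher-moment bound that your proposal never produces. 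The paper avoids needing uniform integrability altogether by proving mean-square convergence directly from the expectation recursion and the PE window argument, and only afterwards identifying the almost sure limit as zero.

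A secondary imprecision points in the same direction: in the saturation case you cannot bound $|\hat{F}_k-F_k|$ below by a linear function of $|\phi_k^T\widetilde{\theta}_{k-1}|$, because $\hat{F}_k-F_k=-\widetilde{z}_kf(\xi_k)$ with $|\widetilde{z}_k|\leq M+\bar{\phi}\bar{\theta}$ capped, while $|\phi_k^T\widetilde{\theta}_{k-1}|$ is unbounded. What is true --- and is the content of the paper's inequality (\ref{second}) --- is that the \emph{product} satisfies $\gamma_k|\widetilde{z}_k|\geq 2|\phi_k^T\hat{\theta}_{k-1}|\geq|\phi_k^T\hat{\theta}_{k-1}|+M\geq|\phi_k^T\widetilde{\theta}_{k-1}|$, using $|\widetilde{z}_k|\geq M-\bar{\phi}\bar{\theta}=2$. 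The accelerated coefficient is thus "spent" converting the capped $\widetilde{z}_k$ back into the full linear factor, so the usable drift is $-c\,\alpha_k(\phi_k^T\widetilde{\theta}_{k-1})^2/r_k$ without an extra $\gamma_k$, not $-c\,\gamma_k\alpha_k(\phi_k^T\widetilde{\theta}_{k-1})^2/r_k$ as you wrote; the weaker form suffices for the argument, but the lower bound you state for $\hat{F}_k-F_k$ itself is false and should be replaced by the product inequality.
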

\begin{proof}
	According to (\ref{eq1}), 	
		\begin{align*}
			\Vert\widetilde{\theta}_k\Vert^2=&[\widetilde{\theta}_{k-1}+\frac{1}{r_k}\alpha_k\gamma_k(\hat{F}_k-s_k)\phi_k]^T\\
			&\cdot[\widetilde{\theta}_{k-1}+\frac{1}{r_k}\alpha_k\gamma_k(\hat{F}_k-s_k)\phi_k]\\
			=&\widetilde{\theta}_{k-1}^T\widetilde{\theta}_{k-1}+2\frac{1}{r_k}\alpha_k\gamma_k(\hat{F}_k-s_k)\phi_k^T\widetilde{\theta}_{k-1}\\
			&+\frac{1}{r_k^2}\alpha_k^2\gamma_k^2(\hat{F}_k-s_k)^2\phi_k^T\phi_k.
		\end{align*}
	Denote $\mathcal{F}_k=\sigma\{d_s,s\leq k\}$. Noticing $E\left[s_k|\mathcal{F}_{k-1}\right]=EI_{\{y_k\leq C\}}=F_k$, according to the properties of conditional expectation in \cite{ash2014real} and Mean-Value theorem in \cite{Apostol:105425}, there exists $\xi_k$ between $C-z_k$ and $C-\phi_k^T\theta$ such that 
	\begin{align}\label{mean-value}
			&2E\left[\frac{1}{r_k}\alpha_k\gamma_k(\hat{F}_k-s_k)\phi_k^T\widetilde{\theta}_{k-1}\right]\nonumber\\
			=&2E\left[E\left[\frac{1}{r_k}\alpha_k\gamma_k(\hat{F}_k-s_k)\phi_k^T\widetilde{\theta}_{k-1}|\mathcal{F}_{k-1}\right]\right]\nonumber\\
			=&2E\left[\frac{1}{r_k}\alpha_k\gamma_k\phi_k^T\widetilde{\theta}_{k-1}(\hat{F}_k-F_k)\right]\nonumber\\
			=&2E\left[\frac{1}{r_k}\alpha_k\gamma_k\phi_k^T\widetilde{\theta}_{k-1}(-\widetilde{z}_kf(\xi_k))\right],
	\end{align}
	where $\widetilde{z}_k=z_k-\phi_k^T\theta$. 
	Denote $\grave{f}_k=f(\xi_k)$ and $ \underline{f}=\min\limits_{x\in\left[C-M,C+M\right]}f(x)$. By  $x\in\left[C-M,C+M\right]$ and Assumption \ref{A2}, we know $\underline{f}>0$. It follows that
	\begin{align}\label{eqconvergence}
		E\Vert\widetilde{\theta}_k\Vert^2
		=&E\Vert\widetilde{\theta}_{k-1}\Vert^2-2\frac{\alpha_k}{r_k}E[\gamma_k\widetilde{z}_k\grave{f}_k\phi_k^T\widetilde{\theta}_{k-1}]\nonumber\\
		&+\frac{\alpha_k^2}{r_k^2}\phi_k^T\phi_kE[\gamma_k^2(\hat{F}_k^2-2\hat{F}_ks_k+s_k^2)]\nonumber\\
		\leq&E\Vert\widetilde{\theta}_{k-1}\Vert^2-2\frac{\underline{\alpha}\underline{f}}{r_k}E\left[\gamma_k\widetilde{z}_k\phi_k^T\widetilde{\theta}_{k-1}\right]+2\frac{\bar{\alpha}^2}{r_k^2}\bar{\phi}^2E[\gamma_k^2]\nonumber\\
		\leq&E\Vert\widetilde{\theta}_{k-1}\Vert^2-2\frac{\underline{\alpha}\underline{f}}{r_k}E\left[\phi_k^T\widetilde{\theta}_{k-1}\widetilde{\theta}_{k-1}^T\phi_kI_{\{|\phi_k^T\hat{\theta}_{k-1}|\leq M\}}\right]\nonumber\\
		&-2\frac{\underline{\alpha}\underline{f}}{r_k}E\left[\phi_k^T\widetilde{\theta}_{k-1}|\phi_k^T\hat{\theta}_{k-1}|\widetilde{z}_kI_{\{|\phi_k^T\hat{\theta}_{k-1}|> M\}}\right]\nonumber\\
		&+2\frac{\bar{\alpha}^2}{r_k^2}\bar{\phi}^2E\left[\gamma_k^2\right].
	\end{align}	
    Since 
	$\phi_k^T\widetilde{\theta}_{k-1}$ and $\widetilde{z}_k$ have the same sign according to the definition of $z_k$ and $M$, we have
	\begin{align*}
		&E\left[\phi_k^T\widetilde{\theta}_{k-1}|\phi_k^T\hat{\theta}_{k-1}|\widetilde{z}_kI_{\{|\phi_k^T\hat{\theta}_{k-1}|> M\}}\right]\nonumber\\
		\geq&2E[|\phi_k^T\widetilde{\theta}_{k-1}|\cdot|\phi_k^T\hat{\theta}_{k-1}|I_{\{|\phi_k^T\hat{\theta}_{k-1}|>M\}}]\nonumber\\
		\geq&E[|\phi_k^T\widetilde{\theta}_{k-1}|\cdot|\phi_k^T\hat{\theta}_{k-1}|I_{\{|\phi_k^T\hat{\theta}_{k-1}|>M\}}]\nonumber\\
		&+E[|\phi_k^T\widetilde{\theta}_{k-1}| MI_{\{|\phi_k^T\hat{\theta}_{k-1}|>M\}}].
	\end{align*}
    Hence we have 
    \begin{align}\label{second}
    	&E\left[\phi_k^T\widetilde{\theta}_{k-1}\widetilde{\theta}_{k-1}^T\phi_kI_{\{|\phi_k^T\hat{\theta}_{k-1}|>M\}}\right]\nonumber\\
		\leq&E\left[|\phi_k^T\widetilde{\theta}_{k-1}|\cdot|\phi_k^T\hat{\theta}_{k-1}|I_{\{|\phi_k^T\hat{\theta}_{k-1}|>M\}}\right]\nonumber\\
        &+E\left[|\phi_k^T\widetilde{\theta}_{k-1}| MI_{\{|\phi_k^T\hat{\theta}_{k-1}|>M\}}\right]\nonumber\\
        \leq&E\left[\phi_k^T\widetilde{\theta}_{k-1}|\phi_k^T\hat{\theta}_{k-1}|\widetilde{z}_kI_{\{|\phi_k^T\hat{\theta}_{k-1}|> M\}}\right].
    \end{align}
	Moreover, by Assumptions \ref{A1} and \ref{A3}, we have 	
	\begin{align}\label{gamma}
		E\left[\gamma_k^2\right]\leq& 1+E\left[|\phi_k^T\hat{\theta}_{k-1}|^2\cdot I_{\{|\phi_k^T\hat{\theta}_{k-1}|>M\}}\right]\nonumber\\
		\leq&1+E\left[(2|\phi_k^T\widetilde{\theta}_{k-1}|^2+2|\phi_k^T\theta|^2)\cdot I_{\{|\phi_k^T\hat{\theta}_{k-1}|>M\}}\right]\nonumber\\
		\leq&1+2E\left[|\phi_k^T\widetilde{\theta}_{k-1}|^2\right]+2(\bar{\phi}\bar{\theta})^2.
	\end{align}
	Taking (\ref{second}) and (\ref{gamma}) into (\ref{eqconvergence}) yields
	\begin{align}\label{convergence speed}
		E\Vert\widetilde{\theta}_k\Vert^2
		\leq&E\Vert\widetilde{\theta}_{k-1}\Vert^2+\left(-2\frac{\underline{\alpha}\underline{f}}{r_k}+4\frac{\bar{\alpha}^2}{r_k^2}\bar{\phi}^2\right)E\left[(\phi_k^T\widetilde{\theta}_{k-1})^2\right]\nonumber\\
		&+2\frac{\bar{\alpha}^2}{r_k^2}\bar{\phi}^2\left[1+2(\bar{\phi}\bar{\theta})^2\right].
	\end{align}
	Assumption \ref{A3} illustrates that  $\frac{1}{k}tr\left(\sum_{i=1}^{k}\phi_i\phi_i^T\right)\geq n\delta^2$. Hence we have $r_k\geq1+\underline{\beta}\sum_{i=1}^{k}\phi_i^T\phi_i=1+\underline{\beta}tr\left(\sum_{i=1}^{k}\phi_i^T\phi_i\right)=1+\underline{\beta}tr\left(\sum_{i=1}^{k}\phi_i\phi_i^T\right)\geq1+k\underline{\beta}n\delta^2$. combining with $r_k\leq1+k\bar{\beta}\bar{\phi}^2$, we know that
	\begin{equation}\label{r_k}
		\frac{1}{r_k}=O\left(\frac{1}{k}\right).
	\end{equation}
	Hence there exists positive integer $i_0$, such that $-2\underline{\alpha}\underline{f}+\frac{4\bar{\alpha}^2\bar{\phi}^2}{r_{i+1}}<-\underline{\alpha}\underline{f}$, for $i\geq i_0$. So when $k> i_0$, we have
	\begin{align}\label{bounded_Etheta}
		E\Vert\widetilde{\theta}_k\Vert^2
		\leq&E\Vert\widetilde{\theta}_{i_0}\Vert^2-\underline{\alpha}\underline{f}\sum_{i=i_0}^{k-1}\frac{1}{r_{i+1}}E|\phi_{i+1}^T\widetilde{\theta}_i|^2\nonumber\\
		&+\sum_{i=i_0}^{k-1}\frac{2\bar{\alpha}^2\bar{\phi}^2}{r_{i+1}^2}\left[1+2(\bar{\phi}\bar{\theta})^2\right].
	\end{align} 
	By (\ref{r_k}), we have $\sum_{i=i_0}^{k}\frac{1}{r_{i+1}}=\infty$ and $\sum_{i=i_0}^{k}\frac{2\bar{\alpha}^2\bar{\phi}^2}{r_{i+1}^2}<\infty$. Hence, the second term on RHS of (\ref{bounded_Etheta}) satisfies
	\begin{align}\label{phitheta}
		\underline{\alpha}\underline{f}\sum_{i=i_0}^{k-1}\frac{1}{r_{i+1}}E|\phi_{i+1}^T\widetilde{\theta}_i|^2<\infty,
	\end{align} otherwise it contradicts with $E\Vert\widetilde{\theta}_k\Vert^2\geq0$. Combining with $\sum_{i=i_0}^{k}\frac{1}{r_{i+1}}=\infty$, (\ref{phitheta}) further implies $\lim\limits_{k\to\infty}E|\phi_{k+1}^T\widetilde{\theta}_k|^2=0$, which combining with (\ref{gamma}) yields that there exists $D_1$ such that 
	\begin{align}\label{gammaD_1}
		E\left[\gamma_k^2\right]<D_1 \mbox{, for all k}.
	\end{align}
	Hence the RHS of (\ref{bounded_Etheta}) is bounded, which means there exists $N_1$ such that 
	\begin{align}\label{EN_1}
		E\Vert\widetilde{\theta}_k\Vert^2<N_1 \mbox{, for all k}.
	\end{align}
	Noticing that
	\begin{align}\label{thetajiantheta}
		\Vert\widetilde{\theta}_{k+j}-\widetilde{\theta}_k\Vert=&\Vert\sum_{i=k+1}^{k+j}(\hat{\theta}_i-\hat{\theta}_{i-1})\Vert\leq\sum_{i=k+1}^{k+j}\Vert\hat{\theta}_i-\hat{\theta}_{i-1}\Vert\nonumber\\
		=&\sum_{i=k+1}^{k+j}\Vert\frac{\phi_i\gamma_i\alpha_i}{r_i}(F(C-z_i)-s_i)\Vert\nonumber\\
		\leq&\frac{\bar{\phi}\bar{\alpha}}{r_{k+1}}\sum_{i=k+1}^{k+j}\Vert\gamma_i\Vert.
	\end{align} 
	By Cr-inequality in \cite{HanFu1991ProbabilityTP}, we have  $E\left(\sum_{i=k}^{k+j}\Vert\gamma_i\Vert\right)^2\leq j\sum_{i=k}^{k+j}E\Vert\gamma_i\Vert^2$ for $j=0,1,...,N-2.$ Hence, it can be seen 
	\begin{align*}
		&E\left(\phi_{k+j+1}^T\widetilde{\theta}_{k-1}\right)^2\\
		\leq&E\left(\phi_{k+j+1}^T\widetilde{\theta}_{k+j}\right)^2+\bar{\phi}^2E\left(\Vert\widetilde{\theta}_{k-1}-\widetilde{\theta}_{k+j}\Vert^2\right)\\
		&+2\bar{\phi}^2E\left(\Vert\widetilde{\theta}_{k+j}\Vert\cdot\Vert\widetilde{\theta}_{k-1}-\widetilde{\theta}_{k+j}\Vert\right)\nonumber\\
		\leq&E\left(\phi_{k+j+1}^T\widetilde{\theta}_{k+j}\right)^2+\frac{j\bar{\phi}^4\bar{\alpha}^2}{r_k^2}\sum_{i=k}^{k+j}E\Vert\gamma_i\Vert^2\\
		&+\frac{2\sqrt{j}\bar{\phi}^3\bar{\alpha}}{r_k}\sqrt{N_1\sum_{i=k}^{k+j}E\Vert\gamma_i\Vert^2}\nonumber\\
		\leq&E\left(\phi_{k+j+1}^T\widetilde{\theta}_{k+j}\right)^2+\frac{j\bar{\phi}^4\bar{\alpha}^2}{r_k^2}(j+1)D_1\\
		&+\frac{2\sqrt{j}\bar{\phi}^3\bar{\alpha}}{r_k}\sqrt{N_1(j+1)D_1}\nonumber\\
		\to&0, \text{ as } k\to\infty.
	\end{align*}
	Hence, by Assumption \ref{A3}, we get  
	\begin{align*}
		\delta E\Vert\widetilde{\theta}_{k-1}\Vert^2&\leq E\widetilde{\theta}_{k-1}^T\left(\sum_{j=0}^{N-1}\phi_{k+j+1}\phi_{k+j+1}^T\right)\widetilde{\theta}_{k-1}^T\\
		&=\sum_{j=0}^{N-1}E\left(\phi_{k+j+1}^T\widetilde{\theta}_{k-1}\right)^2\to0, \text{ as } k\to\infty,
	\end{align*}
	which implies $\lim\limits_{k\to\infty}E\Vert\widetilde{\theta}_k\Vert^2=0.$

	On the other hand, we have $$E\left[\Vert\widetilde{\theta}_k\Vert^2|\mathcal{F}_{k-1}\right]\leq\Vert\widetilde{\theta}_{k-1}\Vert^2+\frac{2\bar{\alpha}^2\bar{\phi}^2\gamma_k^2}{r_k^2},$$ and $$E\left[\sum_{i=1}^{k}\frac{2\bar{\alpha}^2\bar{\phi}^2\gamma_i^2}{r_i^2}\right]=\sum_{i=1}^{k}\frac{2\bar{\alpha}^2\bar{\phi}^2}{r_i^2}E[\gamma_i^2]<\infty.$$By Lemma \ref{lemmaconvergence}, we know that $\widetilde{\theta}_k$ converges almost surely to a finite limit. 
	Combining with $\lim\limits_{k\to\infty}E\Vert\widetilde{\theta}_k\Vert^2=0$ illustrates that $\widetilde{\theta}_k$ almost surely converges to 0. 
\end{proof}
\begin{remark}
	Now it can be explained clearly why we need to introduce the cut-off coefficient $z_k$ and the adaptive accelerated coefficient $\gamma_k$. In summary, $z_k$ plays a role in proving the mean square convergence and $\gamma_k$ is aimed at accelerating the convergence speed. In the analysis of Equation (\ref{eqconvergence}), we aim to obtain the contraction of $E\Vert\widetilde{\theta}_k\Vert^2$ with respect to $E\Vert\widetilde{\theta}_{k-1}\Vert^2$ so that the mean square convergence can be proved later. The contraction requires the second term on RHS of (\ref{eqconvergence}) to be negative. Therefore, we need $f(\xi_k)>0$. In \cite{guo2013recursive,8623153,guo2014identification,zhang2019asymptotically,wang2022unified,10247592,wang2023asymptotically}, this was obtained by using projection operators to ensure the estimates are bounded. Actually, the projection operator is unnecessary. We can get $f(\xi_k)>0$ as long as $\phi_k^T\hat{\theta}_{k-1}$ is bounded. Therefore, the cut-off coefficient $z_k$ is introduced. However, it can be seen from (\ref{eqconvergence}) that since $z_k$ is bounded, $\widetilde{z}_k$ is bounded as well. When the estimate $\hat{\theta}_{k-1}$ deviates significantly from the true parameter $\theta$, $\widetilde{z}_k$ is much smaller than  $\phi_k^T\widetilde{\theta}_{k-1}$. This means the contractive effect of the second term on RHS of (\ref{eqconvergence}) is not good enough, which can not reach $O\left(\Vert\widetilde{\theta}_{k-1}\Vert^2\right)$. Hence we designed the adaptive accelerated coefficient $\gamma_k$. Finally we would like to mention that compared with the high-dimensional projection operators, both $z_k$ and $\gamma_k$ are one-dimensional scalar comparisons, which means the computational complexity is reduced.

\end{remark}

\begin{theorem}
	If Assumptions \ref{A1}-\ref{A4} hold and the coefficients $\alpha_k$ and $\beta_k$ satisfy  $\frac{2\underline{\alpha}\underline{f}\delta^2}{1+\bar{\beta}\bar{\phi}^2}>1$, the estimation error of RPFI algorithm has the following property:
	\begin{align*}
		E\Vert\widetilde{\theta}_k\Vert^2=O\left(\frac{1}{k}\right).
	\end{align*}
\end{theorem}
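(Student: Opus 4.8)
The plan is to upgrade the one-step inequality (\ref{convergence speed}), already derived in the proof of Theorem \ref{theorem1}, into a block contraction that decays like $1-\Theta(1/k)$ and then invoke a standard recursion lemma. Writing $V_k=E\Vert\widetilde{\theta}_k\Vert^2$ and $C_0=2\bar{\alpha}^2\bar{\phi}^2[1+2(\bar{\phi}\bar{\theta})^2]$, inequality (\ref{convergence speed}) reads
\begin{align*}
	V_k\leq V_{k-1}-\Big(\tfrac{2\underline{\alpha}\underline{f}}{r_k}-\tfrac{4\bar{\alpha}^2\bar{\phi}^2}{r_k^2}\Big)E[(\phi_k^T\widetilde{\theta}_{k-1})^2]+\tfrac{C_0}{r_k^2}.
\end{align*}
By (\ref{r_k}), $1/r_k=O(1/k)$, so for $k$ large the bracketed coefficient is positive and behaves like $\frac{2\underline{\alpha}\underline{f}}{r_k}(1-o(1))$, while the last term is $O(1/k^2)$ and summable. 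The only missing ingredient is a lower bound on the excitation term $E[(\phi_k^T\widetilde{\theta}_{k-1})^2]$ in terms of $V_{k-1}$.

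First I would fix a large index $m$ and sum the displayed inequality over the window $k=m+1,\dots,m+N$, using the monotonicity of $r_k$ to pull out the smallest coefficient $\frac{2\underline{\alpha}\underline{f}}{r_{m+N}}(1-o(1))$ in front of $\sum_{k=m+1}^{m+N}E[(\phi_k^T\widetilde{\theta}_{k-1})^2]$. To convert this block sum into $V_m$ I would freeze the estimate at the start of the window, writing $\phi_k^T\widetilde{\theta}_{k-1}=\phi_k^T\widetilde{\theta}_m-\phi_k^T(\widetilde{\theta}_m-\widetilde{\theta}_{k-1})$ and applying $(a-b)^2\ge a^2-2|a||b|$. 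The frozen part is handled by Assumption \ref{A3}: since $\sum_{k=m+1}^{m+N}\phi_k\phi_k^T\ge N\delta^2 I_n$, taking expectations of the quadratic form gives $\sum_{k=m+1}^{m+N}E[(\phi_k^T\widetilde{\theta}_m)^2]\ge N\delta^2 V_m$. The drift part is controlled exactly as in Theorem \ref{theorem1}: the increment bound (\ref{thetajiantheta}) together with $E[\gamma_i^2]<D_1$ from (\ref{gammaD_1}) gives $E\Vert\widetilde{\theta}_m-\widetilde{\theta}_{k-1}\Vert^2=O(1/r_m^2)=O(1/m^2)$, so by Cauchy--Schwarz the cross term is $O(\sqrt{V_m}/m)$.

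Combining these, I would obtain a block recursion of the form
\begin{align*}
	V_{m+N}\leq\Big(1-\tfrac{2\underline{\alpha}\underline{f}N\delta^2}{r_{m+N}}(1-o(1))\Big)V_m+O\!\left(\tfrac{1}{m^2}\right),
\end{align*}
where the $O(\sqrt{V_m}/m)$ cross term has been absorbed into $O(1/m^2)$ because $V_m$ is bounded by (\ref{EN_1}). Finally I would insert $r_{m+N}\le 1+(m+N)\bar{\beta}\bar{\phi}^2$; the hypothesis $\frac{2\underline{\alpha}\underline{f}\delta^2}{1+\bar{\beta}\bar{\phi}^2}>1$ is exactly what makes the effective per-block decay coefficient strictly larger than $1$. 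Applying a standard deterministic recursion lemma (of the type: $u_j\le(1-\frac{a}{j})u_{j-1}+\frac{b}{j^2}$ with $a>1$ forces $u_j=O(1/j)$) along the arithmetic subsequence $m,m+N,m+2N,\dots$, and noting that within each length-$N$ block $V$ changes by only $O(1/m^2)$, yields $V_k=O(1/k)$.

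I expect the main obstacle to be the block/freezing step: the persistent-excitation condition is a statement about a sum over $N$ consecutive regressors, whereas the Lyapunov recursion is per step, so the technical heart is to trade the single-step excitation $E[(\phi_k^T\widetilde{\theta}_{k-1})^2]$ for the full error norm $V_m$ while showing that the estimate's drift over the window contributes only at order $O(1/m^2)$, and then to verify that the constant surviving this bookkeeping is sharp enough that the stated threshold $\frac{2\underline{\alpha}\underline{f}\delta^2}{1+\bar{\beta}\bar{\phi}^2}>1$ indeed suffices.
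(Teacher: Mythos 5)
Your proposal is correct and takes essentially the same route as the paper's own proof: the paper likewise freezes the estimate at the start of each length-$N$ window (its (\ref{phitheta_high})--(\ref{high-rank})), bounds the drift and noise contributions by $O\left(\frac{1}{(k-N+1)^2}\right)$ using (\ref{thetajiantheta}), (\ref{gammaD_1}) and (\ref{EN_1}), and then unrolls the resulting block recursion into products controlled by Lemma \ref{lemmabound}, which plays exactly the role of your standard recursion lemma under the same threshold $\frac{2\underline{\alpha}\underline{f}\delta^2}{1+\bar{\beta}\bar{\phi}^2}>1$. The only cosmetic difference is that the paper iterates the block inequality from $k$ backwards along its residue class modulo $N$ rather than along an arithmetic subsequence plus an in-block correction.
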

\begin{proof}
	Noticing that 	
	\begin{align}\label{phitheta_high}
			&(\phi_i^T\widetilde{\theta}_{i-1})^2\nonumber\\
			=&[\phi_i^T(\widetilde{\theta}_{i-1}-\widetilde{\theta}_{k-N})]^2+(\phi_i^T\widetilde{\theta}_{k-N})^2\nonumber\\
			&+2\phi_i^T(\widetilde{\theta}_{i-1}-\widetilde{\theta}_{k-N})\phi_i^T\widetilde{\theta}_{k-N}\nonumber\\
			\geq&\widetilde{\theta}_{k-N}^T\phi_i\phi_i^T\widetilde{\theta}_{k-N} +2(\widetilde{\theta}_{i-1}-\widetilde{\theta}_{k-N})^T\phi_i\phi_i^T\widetilde{\theta}_{k-N}.\displaybreak
	\end{align}
	By Assumption \ref{A3}, (\ref{eq1}) and (\ref{phitheta_high}), we have	
	\begin{align}\label{high-rank}
		&-\sum_{i=k-N+1}^{k}\frac{\widetilde{\theta}_{i-1}^T\phi_i\phi_i^T\widetilde{\theta}_{i-1}}{r_i}\nonumber\\
		\leq&-\sum_{i=k-N+1}^{k}\frac{\widetilde{\theta}_{k-N}^T\phi_i\phi_i^T\widetilde{\theta}_{k-N} }{r_i}\nonumber\\
		&-\sum_{i=k-N+1}^{k}\frac{2(\widetilde{\theta}_{i-1}-\widetilde{\theta}_{k-N})^T\phi_i\phi_i^T\widetilde{\theta}_{k-N}}{r_i}\nonumber\\
		\leq&-\frac{1}{r_k}\widetilde{\theta}_{k-N}^T\left(\sum_{i=k-N+1}^{k}\phi_i\phi_i^T\right)\widetilde{\theta}_{k-N}\nonumber\\
		&-\sum_{i=k-N+1}^{k}\frac{2(\widetilde{\theta}_{i-1}-\widetilde{\theta}_{k-N})^T\phi_i\phi_i^T\widetilde{\theta}_{k-N}}{r_i}\nonumber\\
		\leq&-\frac{N\delta^2\Vert\widetilde{\theta}_{k-N}\Vert^2}{r_k}-\sum_{i=k-N+1}^{k}\frac{2(\widetilde{\theta}_{i-1}-\widetilde{\theta}_{k-N})^T\phi_i\phi_i^T\widetilde{\theta}_{k-N}}{r_i}.
	\end{align}
	Substituting (\ref{high-rank}) into the first inequality of (\ref{convergence speed}) yields
	\begin{align}\label{equation4}
		E\Vert\widetilde{\theta}_{k}\Vert^2
		\leq&E\Vert\widetilde{\theta}_{k-N}\Vert^2-2\underline{\alpha}\underline{f}\sum_{i=k-N+1}^{k}E\frac{\widetilde{\theta}_{i-1}^T\phi_i\phi_i^T\widetilde{\theta}_{i-1}}{r_i}\nonumber\\
		&+\sum_{i=k-N+1}^{k}\frac{2\bar{\alpha}^2\bar{\phi}^2}{r_i^2}\left[\gamma_i^2\right]\nonumber\\         
		\leq&E\Vert\widetilde{\theta}_{k-N}\Vert^2-\frac{2\underline{\alpha}\underline{f}N\delta^2}{r_k}E\Vert\widetilde{\theta}_{k-N}\Vert^2\nonumber\\
		&-\sum_{i=k-N+1}^{k}4\underline{\alpha}\underline{f}E\frac{(\widetilde{\theta}_{i-1}-\widetilde{\theta}_{k-N})^T\phi_i\phi_i^T\widetilde{\theta}_{k-N}}{r_i}\nonumber\\
		&+\sum_{i=k-N+1}^{k}\frac{2\bar{\alpha}^2\bar{\phi}^2}{r_{i+1}^2}\left[\gamma_k^2\right].
	\end{align}
	We now analyze the third term and fourth term on RHS of (\ref{equation4}). 
	By (\ref{r_k}) and (\ref{gammaD_1})-(\ref{thetajiantheta}), the third term on RHS of (\ref{equation4}) can be written as
	\begin{align}\label{equation5}
		&-\sum_{i=k-N+1}^{k}4\underline{\alpha}\underline{f}E\frac{(\widetilde{\theta}_{i-1}-\widetilde{\theta}_{k-N})^T\phi_i\phi_i^T\widetilde{\theta}_{k-N}}{r_i}\nonumber\\
		\leq&\frac{4\bar{\phi}^2\underline{\alpha}\underline{f}}{r_{k-N+1}}\sum_{i=k-N+1}^{k}E(\Vert\widetilde{\theta}_{i-1}-\widetilde{\theta}_{k-N}\Vert\cdot\Vert\widetilde{\theta}_{k-N}\Vert)\nonumber\\
		\leq&\frac{4\bar{\phi}^3\underline{\alpha}\underline{f}\bar{\alpha}}{r_{k-N+1}^2}\sum_{i=k-N+1}^{k}\sum_{j=k-N+1}^{i}\sqrt{E\Vert\gamma_j\Vert^2}\cdot\sqrt{E\Vert\widetilde{\theta}_{k-N}\Vert^2}\nonumber\\
		=&O\left(\frac{1}{(k-N+1)^2}\right).
	\end{align}	
	By (\ref{r_k}) and (\ref{gammaD_1}), the fourth term on RHS of (\ref{equation4}) can be written as
	\begin{align}\label{equation6}
		\sum_{i=k-N+1}^{k}\frac{2\bar{\alpha}^2\bar{\phi}^2}{r_i^2}E\left[\gamma_i\right]
		=O\left(\frac{1}{(k-N+1)^2}\right).
	\end{align}
%
	From (\ref{equation4})-(\ref{equation6}), we get
	\begin{align}\label{equation7}
		&E\Vert\widetilde{\theta}_k\Vert^2\nonumber\\
		\leq& E\Vert\widetilde{\theta}_{k-N}\Vert^2-\frac{2\underline{\alpha}\underline{f}N\delta^2}{r_k}E\Vert\widetilde{\theta}_{k-N}\Vert^2\nonumber\\
		&+O\left(\frac{1}{(k-N+1)^2}\right)\nonumber\\
		\leq&\prod_{j=0}^{ \left\lfloor \frac{k}{N}\right\rfloor-1}\left(1-\frac{2\underline{\alpha}\underline{f} N\delta^2}{r_{k-jN}}\right)E\Vert\widetilde{\theta}_{k- \left\lfloor \frac{k}{N}\right\rfloor N}\Vert^2\nonumber\\
		&+\sum_{j=1}^{ \left\lfloor \frac{k}{N}\right\rfloor}\prod_{i=0}^{j-1}\left(1-\frac{2\underline{\alpha}\underline{f} N\delta^2}{r_{k-iN}}\right)O\left(\frac{1}{(k-jN+1)^2}\right),
	\end{align}
    where $\left\lfloor x \right\rfloor=\max\{a\in \mathbb{Z}|a\leq x\},\left \lceil x \right \rceil =\min\{a\in \mathbb{Z}|a\geq x\},$ and $\kappa=\left\lceil \frac{k}{N}\right\rceil - \left\lfloor \frac{k}{N}\right\rfloor$. 
	For the first term on RHS of (\ref{equation7}), we get
	\begin{align}\label{equation8}
		\prod_{j=0}^{ \left\lfloor \frac{k}{N}\right\rfloor-1}\left(1-\frac{2\underline{\alpha}\underline{f} N\delta^2}{r_{k-jN}}\right)\leq&\prod_{m=\kappa+1}^{\left\lceil \frac{k}{N}\right\rceil}\left(1-\frac{2\underline{\alpha}\underline{f} N\delta^2}{r_{mN}}\right)\nonumber\\
		\leq&\prod_{m=\kappa+1}^{\left\lceil \frac{k}{N}\right\rceil}\left(1-\frac{2\underline{\alpha}\underline{f} \delta^2}{1+m\bar{\beta}\bar{\phi}^2}\right)\nonumber\\
		=&O\left(\left(\frac{1}{k^{\frac{2\underline{\alpha}\underline{f}\delta^2}{1+\bar{\beta}\bar{\phi}^2}}}\right)\right).
	\end{align}	
	Similarly to (\ref{equation8}), according to (\ref{r_k}) and Lemma \ref{lemmabound}, the second term on RHS of (\ref{equation7}) can be written as 	
	\begin{align}\label{equation9}
		&\sum_{j=1}^{ \left\lfloor \frac{k}{N}\right\rfloor}\prod_{i=0}^{j-1}\left(1-\frac{2\underline{\alpha}\underline{f} N\delta^2}{r_{k-iN}}\right)\cdot O\left(\frac{1}{(k-jN+1)^2}\right)\nonumber\\
		=&O\left(\prod_{j=0}^{ \left\lfloor \frac{k}{N}\right\rfloor-1}(1-\frac{2\underline{\alpha}\underline{f} N\delta^2}{r_{k+1-jN}})\right)\nonumber\\
		&+\sum_{m=1}^{\left\lfloor \frac{k}{N}\right\rfloor-1}\prod_{p=\kappa+m+1}^{\left\lceil \frac{k}{N}\right\rceil}\left(1-\frac{2\underline{\alpha}\underline{f} N\delta^2}{r_{pN}}\right)O\left(\frac{1}{m^2N^2}\right)\nonumber\\
		=&O\left(\prod_{m=\kappa+1}^{\left\lceil \frac{k}{N}\right\rceil}(1-\frac{\frac{2\underline{\alpha}\underline{f}\delta^2}{1+\bar{\beta}\bar{\phi}^2}}{m})\right)\nonumber\displaybreak\\
		&+\sum_{m=1}^{\left\lfloor \frac{k}{N}\right\rfloor-1}\prod_{p=\kappa+m+1}^{\left\lceil \frac{k}{N}\right\rceil}\left(1-\frac{\frac{2\underline{\alpha}\underline{f}\delta^2}{1+\bar{\beta}\bar{\phi}^2}}{p}\right)O\left(\frac{1}{m^2N^2}\right)\nonumber\\
		=&\left\{
		\begin{aligned}
			&O\left(\left(\frac{1}{k^{\frac{2\underline{\alpha}\underline{f}\delta^2}{1+\bar{\beta}\bar{\phi}^2}}}\right)\right), & \frac{2\underline{\alpha}\underline{f}\delta^2}{1+\bar{\beta}\bar{\phi}^2}&<1;\\
			&O\left(\frac{\log k}{k}\right), & \frac{2\underline{\alpha}\underline{f}\delta^2}{1+\bar{\beta}\bar{\phi}^2}&=1;\\
			&O\left(\frac{1}{k}\right), & \frac{2\underline{\alpha}\underline{f}\delta^2}{1+\bar{\beta}\bar{\phi}^2}&>1.
		\end{aligned}
		\right.
	\end{align}

	Taking (\ref{equation8}) and (\ref{equation9}) into (\ref{equation7}) yields $E\Vert\widetilde{\theta}_k\Vert^2=O\left(\frac{1}{k}\right)$, if $\frac{2\underline{\alpha}\underline{f}\delta^2}{1+\bar{\beta}\bar{\phi}^2}>1$.
\end{proof}

\section{Asymptotically efficient algorithm}\label{asymptotic efficiency}

The previous section has established that the RPFI algorithm converges in both mean square and almost surely. And the mean square convergence rate is proved to be $O\left(\frac{1}{k}\right)$. This section will discuss how to construct the optimal identification algorithm with binary-valued observations under low computational complexity.

We use the CR lower bound rather than the least square method to construct the optimal identification algorithm. Here are the reasons. On the one hand, due to the strong nonlinearity in binary-valued systems, their likelihood functions lack explicit solutions. It makes the identification algorithms unable to be designed similar to least squares algorithms, which minimizes the objective function as the optimal criterion of designing identification algorithms. On the other hand, we notice that the CR bound is the lower bound of the unbiased estimate covariance. Therefore, we generally use the CR lower bound as the optimality criterion of binary-valued identification. Therefore, inspired by \cite{zhang2019asymptotically,wang2023asymptotically}, we design 
\begin{align*}
	\hat{\alpha}_k=\frac{\hat{f}_k}{\hat{F}_k(1-\hat{F}_k)},\quad
	\hat{\beta}_k=\frac{\hat{f}_k^2}{\hat{F}_k(1-\hat{F}_k)},
\end{align*}
to make adaptive step size. And we use
\begin{align*}
	\hat{P}_k=\hat{P}_{k-1}-\frac{\hat{\beta}_k\hat{P}_{k-1}\phi_k\phi_k^T\hat{P}_{k-1}^T}{1+\hat{\beta}_k\phi_k^T\hat{P}_{k-1}\phi_k},
\end{align*} 
to approximate the CR lower bound. But unlike \cite{zhang2019asymptotically,wang2023asymptotically}, this paper removes the projection operator to reduce the computational complexity. To make full use of the prior information, we use a cut-off coefficient $z_k$ to remove the projection operators and introduce an adaptive accelerated coefficient $\gamma_k$ to maintain high convergence rate similarly to the RPFI algorithm.

Based on above, we give the IMPF algorithm as follows:
\begin{algorithm}[h]
	\caption{The IMPF Algorithm}
	\label{Algorithm2}
	For the  arbitrary initial value $\hat{\theta}_0\in\mathbb{R}^n$ and an positive definitive matrix $\hat{P}_0\in\mathbb{R}^{n\times n}$, the algorithm is recursively defined at any $k\geq0$ as follows:
	
	Step 1: Update of the cut-off coefficient and the adaptive accelerated coefficient:
	\begin{align}\label{eq2.1}
		\begin{aligned}
			z_k=&
			\left\{
			\begin{aligned}
				&M, & \phi_k^T\hat{\theta}_{k-1}&>M;\\
				&\phi_k^T\hat{\theta}_{k-1}, & \phi_k^T\hat{\theta}_{k-1}&\in[-M,M];\\
				&-M, & \phi_k^T\hat{\theta}_{k-1}&<-M,
			\end{aligned}
			\right.\\
			\gamma_k=&
			\left\{
			\begin{aligned}
				&1, & |\phi_k^T\hat{\theta}_{k-1}|&\leq M;\\
				&|\phi_k^T\hat{\theta}_{k-1}|, & |\phi_k^T\hat{\theta}_{k-1}|&>M,
			\end{aligned}
			\right.
		\end{aligned}
	\end{align}
	where $M=\bar{\phi}\bar{\theta}+2$.
	
	Step 2: Update of the adaptive step coefficients:
	\begin{align}
		\begin{aligned}
			\hat{\alpha}_k=&\frac{\hat{f}_k}{\hat{F}_k(1-\hat{F}_k)},\quad
			\hat{\beta}_k=&\frac{\hat{f}_k^2}{\hat{F}_k(1-\hat{F}_k)},
		\end{aligned}
	\end{align}
	where $\hat{F}_k=F(C-z_k)$, $\hat{f}_k=f(C-z_k)$.
	
	Step 3: Estimation:
	\begin{align}\label{eq2}
		\left\{
		\begin{aligned}
			\hat{\theta}_{k}=&\hat{\theta}_{k-1}+\hat{P}_k\phi_k\gamma_k\hat{\alpha}_k(\hat{F}_k-s_k),\\
			\hat{P}_k=&\hat{P}_{k-1}-\frac{\hat{\beta}_k\hat{P}_{k-1}\phi_k\phi_k^T\hat{P}_{k-1}^T}{1+\hat{\beta}_k\phi_k^T\hat{P}_{k-1}\phi_k}.
		\end{aligned}
	\right.
	\end{align}

\end{algorithm}
\begin{remark}
	The IMPF algorithm appears somewhat similar to the least square method in its structure. But due to binary-valued observations, the process of deriving the IMPF algorithm and the analysis of its properties differ fundamentally from the least square method. Compared to the algorithm in \cite{Ke2022RecursiveIO}, the IMPF algorithm makes full use of the prior information and designs adaptive weight coefficients to approximate the CR lower bound. Since the projection operator is removed, the computational complexity of this algorithm is much lower than other asymptotically efficient algorithm in \cite{zhang2019asymptotically, wang2023asymptotically}. 
\end{remark}

\begin{remark}\label{r}
	Compared with the RPFI algorithm, the IMPF algorithm has adaptive weight coefficients and it uses $\hat{P}_k$ rather than $r_k$ since $\hat{P}_k$ can represent the covariance of estimate error. $P_k$ makes it difficult to analyze the performance of the IMPF algorithm for high-order FIR systems. So next we will give the theoretical analysis of the IMPF algorithm for first-order FIR systems.
\end{remark}

\begin{theorem}\label{1k}
	For the first-order FIR system (\ref{M}) with binary-valued observations (\ref{moxingtwo}), if Assumptions \ref{A1}-\ref{A3} hold, the IMPF algorithm is convergent in mean square with 
	\begin{align*}
		E\widetilde{\theta}_k^2=O\left(\frac{1}{k}\right).
	\end{align*}
\end{theorem}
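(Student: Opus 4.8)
The plan is to reduce the first-order IMPF recursion to a one-dimensional recursive-least-squares-type scheme and then to bound the normalized error $W_k=\hat P_k^{-1}\widetilde\theta_k^2$. First I would record the structural facts that neutralize the adaptivity of the step coefficients. Since $z_k\in[-M,M]$, the argument $C-z_k$ stays in the compact set $[C-M,C+M]$, so by Assumption \ref{A2} the quantities $\hat F_k$, $1-\hat F_k$ are bounded away from $0$ and $\hat f_k\in[\underline f,\bar f]$ with $\underline f>0$; hence $\hat\alpha_k,\hat\beta_k$ lie in fixed positive intervals and satisfy the identities $\hat\beta_k=\hat f_k\hat\alpha_k$ and $\hat\alpha_k^2\hat F_k(1-\hat F_k)=\hat\beta_k$. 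In the scalar case the Riccati step of \eqref{eq2} collapses, by the scalar inversion identity, to $\hat P_k^{-1}=\hat P_{k-1}^{-1}+\hat\beta_k\phi_k^2=\hat P_0^{-1}+\sum_{i=1}^k\hat\beta_i\phi_i^2$. Combining the bounds on $\hat\beta_i$ with the persistent-excitation inequality of Assumption \ref{A3} summed over non-overlapping windows of length $N$, and with $\sup_k\phi_k^2\le\bar\phi^2$, yields a deterministic two-sided bound $c_1k\le\hat P_k^{-1}\le c_2k$ for all large $k$; equivalently $\hat P_k$ is squeezed between constant multiples of $1/k$.

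Before addressing the rate, I would dispose of convergence. Setting $\alpha_k=\hat\alpha_k$ and $\beta_k=\hat\beta_k$ (both $\mathcal F_{k-1}$-measurable and confined to positive intervals, hence meeting Assumption \ref{A4}) and replacing $r_k$ by the asymptotically equivalent $\hat P_k^{-1}$, the scalar IMPF recursion \eqref{eq2} becomes a bona fide instance of the RPFI recursion \eqref{eq1}. Theorem \ref{theorem1} then applies and delivers $E\widetilde\theta_k^2\to0$ and $\widetilde\theta_k\to0$ a.s., as well as the $L^2$-boundedness of $\gamma_k$ produced inside its proof.

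The heart of the argument is a recursion for $W_k$. Multiplying the squared-error recursion by $\hat P_k^{-1}$, substituting $\hat P_k^{-1}=\hat P_{k-1}^{-1}+\hat\beta_k\phi_k^2$, taking $E[\,\cdot\mid\mathcal F_{k-1}]$, and using $E[s_k\mid\mathcal F_{k-1}]=F_k$, the Mean-Value theorem $\hat F_k-F_k=-\widetilde z_kf(\xi_k)$ and $E[(s_k-F_k)^2\mid\mathcal F_{k-1}]=F_k(1-F_k)$, I obtain
\begin{align*}
E[W_k\mid\mathcal F_{k-1}]=&W_{k-1}+\hat\beta_k\phi_k^2\widetilde\theta_{k-1}^2-2\gamma_k\hat\alpha_kf(\xi_k)(\phi_k\widetilde\theta_{k-1})\widetilde z_k\\
&+\hat P_k\gamma_k^2\hat\alpha_k^2\phi_k^2\big[(\hat F_k-F_k)^2+F_k(1-F_k)\big].
\end{align*}
On the interior regime $\{|\phi_k\hat\theta_{k-1}|\le M\}$ one has $\gamma_k=1$ and $\widetilde z_k=\phi_k\widetilde\theta_{k-1}$, so the part proportional to $\widetilde\theta_{k-1}^2$ equals $\hat\alpha_k\phi_k^2\widetilde\theta_{k-1}^2\big[\hat f_k-2f(\xi_k)+\hat P_k\hat\alpha_k\phi_k^2f(\xi_k)^2\big]$, where I used $\hat\beta_k=\hat f_k\hat\alpha_k$. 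Since $\hat P_k\to0$ and, by the a.s.\ convergence just established, $z_k-\phi_k\theta\to0$ so that $\hat f_k,f(\xi_k)\to f_k$, the bracket tends to $-f_k<0$ and this part becomes a strict contraction; on the exterior regime the accelerated coefficient $\gamma_k$ makes the (nonpositive) cross term dominate exactly as in the proof of Theorem \ref{theorem1}, and convergence guarantees that the exterior regime is visited only finitely often a.s. The only genuinely positive residual is the noise injection, which by $\hat\alpha_k^2F_k(1-F_k)\to\hat\alpha_k^2\hat F_k(1-\hat F_k)=\hat\beta_k$ is of size $\hat P_k\hat\beta_k\phi_k^2=\hat P_k(\hat P_k^{-1}-\hat P_{k-1}^{-1})$. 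Thus, up to negligible terms, $W_k$ obeys a Robbins--Siegmund recursion $E[W_k\mid\mathcal F_{k-1}]\le(1-c\,\hat\beta_k\phi_k^2\hat P_k)W_{k-1}+C\,\hat\beta_k\phi_k^2\hat P_k$, in which the contraction weights $\hat\beta_k\phi_k^2\hat P_k=(\hat P_k^{-1}-\hat P_{k-1}^{-1})/\hat P_k^{-1}$ sum to infinity while the injected noise is a bounded multiple of the same weights; hence $\sup_kEW_k<\infty$. Since $\hat P_k\le(c_1k)^{-1}$ deterministically, $E\widetilde\theta_k^2=E[\hat P_kW_k]\le(c_1k)^{-1}\sup_kEW_k=O(1/k)$.

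I expect the main obstacle to be the rigorous passage from ``eventual pathwise contraction'' to the uniform bound $\sup_kEW_k<\infty$. The sign of the decisive bracket $\hat f_k-2f(\xi_k)$ is guaranteed only once $\hat\theta_{k-1}$ is close to $\theta$, so the clean contraction-versus-noise balance is available only after a random time; turning this into an expectation bound requires splicing together the a.s.\ and mean-square convergence, the $L^2$-boundedness of $\gamma_k$, and uniform integrability (all furnished by the a priori bounds above) in order to control both the exterior regime and the transient phase where the bracket may still be positive.
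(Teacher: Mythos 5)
Your reduction to the normalized error $W_k=\hat P_k^{-1}\widetilde\theta_k^2$ is algebraically correct, but the passage from the exact recursion to the claimed Robbins--Siegmund inequality $E[W_k\mid\mathcal F_{k-1}]\le(1-c\,\hat\beta_k\phi_k^2\hat P_k)W_{k-1}+C\,\hat\beta_k\phi_k^2\hat P_k$ is a genuine gap, not a technicality, and it sits exactly where the theorem's difficulty lives. The coefficient $\hat\alpha_k\phi_k^2\bigl[\hat f_k-2f(\xi_k)+O(1/k)\bigr]$ multiplying $\widetilde\theta_{k-1}^2$ is not uniformly negative: $f(\xi_k)$ is the average slope of $F$ between $C-z_k$ and $C-\phi_k\theta$, so when $\hat\theta_{k-1}$ is far from $\theta$ (say $C-z_k$ near the mode of the Gaussian density and $C-\phi_k\theta$ in a tail) one has $\hat f_k-2f(\xi_k)$ positive and bounded away from zero. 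In that regime your recursion reads $E[W_k\mid\mathcal F_{k-1}]\le(1+c'/k)W_{k-1}+\cdots$ with non-summable growth factors, so $EW_k$ can grow polynomially during the transient. The contraction only becomes available after the random time at which $\hat\theta_{k-1}$ enters a neighborhood of $\theta$, and the convergence supplied by Theorem \ref{theorem1} carries no rate (the paper's second theorem gives a rate only under a coefficient condition that the IMPF gains $\hat\alpha_k,\hat\beta_k$ need not satisfy), so you have no tail bound on that random time; moreover the ``uniform integrability'' you invoke --- of $W_k\sim k\,\widetilde\theta_k^2$ --- is essentially the statement $E\widetilde\theta_k^2=O(1/k)$ itself. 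The argument is circular precisely at its decisive step, as your own closing paragraph half-concedes.

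The paper escapes this trap with two devices your proposal lacks. First, it never needs your bracket to be negative: working with the un-normalized $E\widetilde\theta_k^2$ and comparing the adaptive gains with oracle gains $(\alpha_k,\beta_k,P_k)$ built from the true parameter, it introduces $\lambda_k=\frac{\hat P_k\hat\alpha_k}{P_k\beta_k}\check f_k$, which thanks to the cut-off $z_k$ and compactness is bounded below by a constant $\underline\lambda>0$ for \emph{all} $k$, transient included; this yields an unconditional, possibly weak, rate $E\widetilde\theta_k^2=O(k^{-2\underline\lambda})$. Second, it upgrades this to $O(1/k)$ by a bootstrap: the quantitative estimate $|1-\lambda_k|=O\bigl(|\widetilde\theta_{k-1}|+\frac1k\sum_{i=1}^k|\widetilde\theta_{i-1}|\bigr)$, combined with rates for the higher moments $E\widetilde\theta_k^{2m}$ (Step 1 of the paper's proof) and Cauchy--Schwarz, improves the exponent iteratively ($2\underline\lambda\to3\underline\lambda\to\tfrac72\underline\lambda\to\cdots$) until it exceeds $1$. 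Rescuing your $W_k$ formulation would require an a priori polynomial rate valid through the transient plus a tail bound on the entry time into the contraction regime --- which amounts to re-deriving the paper's Steps 1 and 2. A smaller, fixable issue: the scalar IMPF is not literally a ``bona fide instance'' of RPFI under Assumption \ref{A4}, since $\hat\alpha_k,\hat\beta_k$ --- and hence $r_k$ --- are random, while the proof of Theorem \ref{theorem1} pulls $\alpha_k/r_k$ out of expectations; the argument can be adapted using $\underline\alpha\le\hat\alpha_k\le\bar\alpha$ and deterministic two-sided bounds on $r_k$, but this needs to be said, not assumed.
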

\begin{proof}
	We divide the proof into two steps.
	
	\textbf{Step 1:}
	We will prove that there exists $\mu\in\mathbb{R}^+$ satisfying 
	\begin{equation}\label{step1}
		\begin{aligned}
			E\widetilde{\theta}_k^{2m}=
			\left\{
			\begin{aligned}
				&O\left(\frac{1}{k^{2m\mu}}\right), &2\mu&<1,\\
				&O\left(\frac{\log^m k}{k^m}\right), &2\mu&=1,\\
				&O\left(\frac{1}{k^m}\right), &2\mu&>1.\\
			\end{aligned}
			\right.\\
		\end{aligned}
	\end{equation}
    According to Assumption \ref{A2} and the boundedness of $z_k$, we know that $f_k, F_k, \hat{f}_k, \hat{F}_k$ defined in (\ref{fandF}) are bounded. 
	Denote 
	\begin{align}\label{Pbeta}
		\begin{aligned}
			\alpha_k&=\frac{f_k}{F_k(1-F_k)},\quad \beta_k=\frac{f_k^2}{F_k(1-F_k)},\\
			P_k&=P_{k-1}-\frac{\beta_kP_{k-1}\phi_k\phi_k^TP_{k-1}^T}{1+\beta_k\phi_k^TP_{k-1}\phi_k},
		\end{aligned}
	\end{align}
	and we know that there exist $\underline{\alpha}$, $\bar{\alpha}$, $\underline{\beta}$ and $\bar{\beta}$ such that $0<\underline{\alpha}\leq\alpha_k, \hat{\alpha}_k\leq\bar{\alpha}<\infty$ and $0<\underline{\beta}\leq\beta_k, \hat{\beta}_k\leq\bar{\beta}<\infty$. 
	By (\ref{eq2}) and (\ref{Pbeta}), we have $\hat{P}_k^{-1}=\sum_{i=1}^{k}\hat{\beta}_i\phi_i^2+\hat{P}_0^{-1}$ and $P_k^{-1}=\sum_{i=1}^{k}\beta_i\phi_i^2+P_0^{-1}$. 
	Similarly to the proof of (\ref{r_k}), it can be seen that
	\begin{align}\label{P_k}
		\hat{P}_k=O\left(\frac{1}{k}\right),\quad P_k=O\left(\frac{1}{k}\right).
	\end{align}
	According (\ref{eq2}), we have
	\begin{equation*}
		\begin{aligned}
			\widetilde{\theta}_k^2
			=\widetilde{\theta}_{k-1}^2+2\hat{P}_k\hat{\alpha}_k\gamma_k(\hat{F}_k-s_k)\phi_k\widetilde{\theta}_{k-1}+\hat{P}_k^2\hat{\alpha}_k^2\phi_k^2\gamma_k^2(\hat{F}_k-s_k)^2.
		\end{aligned}
	\end{equation*}
    Similarly to the proof of (\ref{mean-value}), there exists $\zeta_k$ between $C-z_k$ and $C-\phi_k\theta$ such that 
	\begin{equation*}
		\begin{aligned}
			&2E\left[\hat{P}_k\hat{\alpha}_k\gamma_k(\hat{F}_k-s_k)\phi_k\widetilde{\theta}_{k-1}\right]\\
			=&2E\left[\hat{P}_k\hat{\alpha}_k\phi_k\gamma_k\widetilde{\theta}_{k-1}(-\widetilde{z}_kf(\zeta_k))\right],
		\end{aligned}
	\end{equation*}
	where $\widetilde{z}_k=z_k-\phi_k\theta$. Noticing $|\hat{F}_k|\leq1$ and  $|s_k|\leq1$, it can be derived that
	\begin{align}\label{shelambda}
		E\widetilde{\theta}_k^2
		\leq& E\widetilde{\theta}_{k-1}^2+2E\left[\hat{P}_k\hat{\alpha}_k\phi_k\gamma_k\widetilde{\theta}_{k-1}(-\widetilde{z}_kf(\zeta_k))\right]\nonumber\\
		&+4E\left[\hat{P}_k^2\hat{\alpha}_k^2\phi_k^2\gamma_k^2\right].
	\end{align}
    For the second term on RHS of (\ref{shelambda}), let $\check{f}_k=f(\zeta_k), \underline{f}=\min\limits_{x\in\left[C-M,C+M\right]}f(x),  \lambda_k=\frac{\hat{P}_k\hat{\alpha}_k}{P_k\beta_k}\check{f}_k$. According to the boundedness of $z_{k-1},\theta,\xi_{k-1}$, and the continuity of $\lambda_k$, we learn  $\underline{\lambda}=\inf\limits_{k}\min\limits_{z_{k-1},\xi_{k-1}}\lambda_k>0$  and $\bar{\lambda}=\sup\limits_{k}\max\limits_{z_{k-1},\xi_{k-1}}\lambda_k<\infty.$ 
    Substituting $\lambda_k$ into (\ref{shelambda}), we have 
	\begin{align}\label{Beq1}
		E\widetilde{\theta}_k^2
		\leq&E\widetilde{\theta}_{k-1}^2-2P_k\beta_{k}E\left[\phi_k^2\lambda_k\widetilde{\theta}_{k-1}^2I_{\{|\phi_k\hat{\theta}_{k-1}|\leq M\}}\right]\nonumber\\
		&-2P_k\beta_{k}E\left[\phi_k\lambda_k\widetilde{\theta}_{k-1}|\phi_k\hat{\theta}_{k-1}|\widetilde{z}_kI_{\{|\phi_k\hat{\theta}_{k-1}|> M\}}\right]\nonumber\\
		&+4\frac{P_k^2\beta_k^2\bar{\lambda}^2}{\underline{f}^2}\phi_k^2E\left[\gamma_k^2\right].
	\end{align}
	Similarly to the proof of (\ref{second}), we can prove that 
	\begin{align}\label{dierxianggaoci}
		&E[\widetilde{\theta}_{k-1}^{2m-1}|\phi_k\hat{\theta}_{k-1}|\phi_k\lambda_k\widetilde{z}_kI_{\{|\phi_k\hat{\theta}_{k-1}|>M\}}]\nonumber\\
		\geq&E[\widetilde{\theta}_{k-1}^{2m}\phi_k^2\lambda_kI_{\{|\phi_k\hat{\theta}_{k-1}|>M\}}], k\in \mathbb{Z}^+,m\in\mathbb{Z}^+.
	\end{align}  
	In order to deal with the fourth term of (\ref{Beq1}), we will prove that 
	for all $t\geq1$, $E\left[\gamma_k^t\right]$ is bounded. 
		Similarly to (\ref{gamma}), we have
		\begin{align}\label{gamma_high}
			\gamma_k^t
			\leq&1+2^{t-1}|\phi_k\widetilde{\theta}_{k-1}|^t+2^{t-1}(\bar{\phi}\bar{\theta})^t,
		\end{align}   
	and
		\begin{align}\label{E_gamma_high}
			E\left[\gamma_k^t\right]
			\leq&1+2^{t-1}\bar{\phi}^tE\left[|\widetilde{\theta}_{k-1}|^t\right]+2^{t-1}(\bar{\phi}\bar{\theta})^t.
		\end{align}
		Hence we transform the problem of proving $E\left[\gamma_k^t\right]<\infty$ into the problem of proving $E\left[|\widetilde{\theta}_k|^t\right]<\infty$. Next, we will prove that $E\widetilde{\theta}_k^{2t}$ is bounded for all $t\geq1$.
		If $t=1$,
		\begin{equation*}
			\begin{aligned}
				E\widetilde{\theta}_k^2
				\leq&\left(1-2\underline{\lambda}P_k\beta_{k}\phi_k^2+8\frac{P_k^2\beta_{k}^2\phi_k^4\bar{\lambda}^2}{\underline{f}^2}\right)E\widetilde{\theta}_{k-1}^2\\
				&+4\frac{P_k^2\beta_{k}^2\phi_k^2\bar{\lambda}^2}{\underline{f}^2}\left(1+2(\bar{\phi}\bar{\theta})^2\right).
			\end{aligned}
		\end{equation*}
		Since $P_k=O\left(\frac{1}{k}\right)$, there exists $i_0$ such that when  $i>i_0$, $1-2\underline{\lambda}P_k\beta_{k}\phi_k^2+8\frac{P_k^2\beta_{k}^2\phi_k^4\bar{\lambda}^2}{\underline{f}^2}<1$. So we get
		\begin{align*}
				E\widetilde{\theta}_k^2\leq&\prod_{i=i_0+1}^{k}\left(1-2\underline{\lambda}P_i\beta_{i}\phi_i^2+8\frac{P_i^2\beta_{i}^2\phi_i^4\bar{\lambda}^2}{\underline{f}^2}\right)E\widetilde{\theta}_{i_0}^2\\
				&+\sum_{l=i_0+1}^{k-1}\prod_{i=l+1}^{k}\left(1-2\underline{\lambda}P_i\beta_{i}\phi_i^2+8\frac{P_i^2\beta_{i}^2\phi_i^4\bar{\lambda}^2}{\underline{f}^2}\right)\\
				&\cdot\left(\frac{2P_{l-1}^2\beta_{l}^2\phi_l^2\bar{\lambda}^2}{\underline{f}^2}\right)\left(1+2(\bar{\phi}\bar{\theta})^2\right)\displaybreak\\
				\leq&E\widetilde{\theta}_{i_0}^2+\sum_{l=i_0+1}^{k-1}O\left(\frac{1}{l^2}\right)<\infty.
		\end{align*}
		Hence $E\widetilde{\theta}_k^2<\infty$ for all $k\geq1$. When $t=2$, according to the algorithm (\ref{eq2}) we have 
		\begin{align}\label{high_E}
			E\widetilde{\theta}_k^4=&E\widetilde{\theta}_{k-1}^4+4E\left[\widetilde{\theta}_{k-1}^3\hat{P}_k\hat{\alpha}_k\phi_k\gamma_k(\hat{F}_k-s_k)\right]\nonumber\\
			&+6E\left[\widetilde{\theta}_{k-1}^2\hat{P}_k^2\hat{\alpha}_k^2\phi_k^2\gamma_k^2(\hat{F}_k-s_k)^2\right]\nonumber\\
			&+4E\left[\widetilde{\theta}_{k-1}\hat{P}_k^3\hat{\alpha}_k^3\phi_k^3\gamma_k^3(\hat{F}_k-s_k)^3\right]\nonumber\\
			&+E\left[\hat{P}_k^4\hat{\alpha}_k^4\phi_k^4\gamma_k^4(\hat{F}_k-s_k)^4\right].
		\end{align}
		Then it can be seen that
		\begin{align}\label{Ex4_more}
			E\widetilde{\theta}_k^4\leq&\left(1-4\underline{\lambda}P_k\beta_{k}\phi_k^2+\frac{48P_k^2\beta_k^2\phi_k^4\bar{\lambda}^2}{\underline{f}^2}+\frac{128P_k^3\beta_k^3\phi_k^6\bar{\lambda}^3}{\underline{f}^3}\notag\right.\\
			&\left.+\frac{128P_k^4\beta_k^4\phi_k^8\bar{\lambda}^4}{\underline{f}^4}\right)E\widetilde{\theta}_{k-1}^4+O\left(\frac{1}{k^2}\right).
		\end{align}
		There exists $i_1$ such that when $i>i_1$, $1-4\underline{\lambda}P_k\beta_{k}\phi_k^2+\frac{48P_k^2\beta_k^2\phi_k^4\bar{\lambda}^2}{\underline{f}^2}+\frac{128P_k^3\beta_k^3\phi_k^6\bar{\lambda}^3}{\underline{f}^3}+\frac{128P_k^4\beta_k^4\phi_k^8\bar{\lambda}^4}{\underline{f}^4}<1$. Hence we have
		\begin{align*}
			E\widetilde{\theta}_k^4
			\leq&\prod_{i=i_1+1}^{k}\left(1-4\underline{\lambda}P_i\beta_{i}\phi_i^2+\frac{48P_i^2\beta_i^2\phi_i^4\bar{\lambda}^2}{\underline{f}^2}\notag\right.\\
			&\left.+\frac{128P_i^3\beta_i^3\phi_i^6\bar{\lambda}^3}{\underline{f}^3}+\frac{128P_i^4\beta_i^4\phi_i^8\bar{\lambda}^4}{\underline{f}^4}\right)E\widetilde{\theta}_{i_1}^4\nonumber\\
			&+\sum_{l=i_1+1}^{k-1}\prod_{i=l+1}^{k}\left(1-4\underline{\lambda}P_i\beta_{i}\phi_i^2+\frac{48P_i^2\beta_i^2\phi_i^4\bar{\lambda}^2}{\underline{f}^2}\notag\right.\\
			&\left.+\frac{128P_i^3\beta_i^3\phi_i^6\bar{\lambda}^3}{\underline{f}^3}+\frac{128P_i^4\beta_i^4\phi_i^8\bar{\lambda}^4}{\underline{f}^4}\right)O\left(\frac{1}{l^2}\right)\nonumber\\
			\leq&E\widetilde{\theta}_{i_1}^2 +\sum_{l=i_1+1}^{k-1}O\left(\frac{1}{l^2}\right)<\infty.
		\end{align*}
		Similarly, when $t=3$, we have $E\widetilde{\theta}_k^6<\infty$. Repeating the above steps, we have $E\widetilde{\theta}_k^{2t}<\infty$ for all $t\geq1$. By Lyapunov 
		inequality, we have $E|\widetilde{\theta}_k|^t<\infty$ for all $t\geq1$, together with (\ref{E_gamma_high}) yields that $E\left[|\gamma_k|^t\right]<\infty$ for all $t\geq1$.

		Now we will prove (\ref{step1}). For $m=1$, according to (\ref{dierxianggaoci}), the boundedness of $E\left[\gamma_k^2\right]$ and Lemma \ref{zenmelaide}, (\ref{Beq1}) can be written as  		
		\begin{align}\label{Ex2}
			E\widetilde{\theta}_k^2
			\leq&\left(1-2\underline{\lambda}P_k\beta_{k}\phi_k^2\right)E\widetilde{\theta}_{k-1}^2+\frac{2P_k^2\bar{\beta}^2\bar{\phi}^2\bar{\lambda}^2}{\underline{f}^2}E\left[\gamma_k^2\right]\nonumber\\
			\leq&\prod_{i=1}^{k}\left(1-P_i\beta_{i}\phi_i^2\right)^{2\underline{\lambda}}E\widetilde{\theta}_0^2\nonumber\\
			&+\sum_{l=1}^{k-1}\prod_{i=l+1}^{k}\left(\left(1-P_i\beta_{i}\phi_i^2\right)^{2\underline{\lambda}}\right)O\left(\frac{1}{l^2}\right)\nonumber\\
			=&
			\left\{
			\begin{aligned}
				&O\left(\frac{1}{k^{2\underline{\lambda}}}\right), &2\underline{\lambda}&<1,\\
				&O\left(\frac{\log k}{k}\right), &2\underline{\lambda}&=1,\\
				&O\left(\frac{1}{k}\right), &2\underline{\lambda}&>1.\\
			\end{aligned}
			\right.
		\end{align}
		For $m=2$, let's come back to (\ref{high_E}). 
		Since $E|\widetilde{\theta}_k|^t<\infty$ for $t\geq1$,  we find that the fourth and fifth terms of (\ref{high_E}) are both as $o\left(\frac{1}{k^2}\right)$.
		For the third term of (\ref{high_E}), we have  $$6E\left[\widetilde{\theta}_{k-1}^2\hat{P}_k^2\hat{\alpha}_k^2\phi_k^2\gamma_k^2(\hat{F}_k-s_k)^2\right]=O\left(\frac{1}{k^2}E\left[\widetilde{\theta}_{k-1}^2\gamma_k^2\right]\right).$$ Next we will give an estimation of $E\left[\widetilde{\theta}_{k-1}^2\gamma_k^2\right]$. By (\ref{gamma_high}), we have $E\left[\widetilde{\theta}_{k-1}^2\gamma_k^2\right]=O\left(E\widetilde{\theta}_{k-1}^2\right)+O\left(E\widetilde{\theta}_{k-1}^4\right)$. 
		For (\ref{Ex4_more}) we know that there exists $i_2$ such that when $i>i_2$, $1-4\underline{\lambda}P_k\beta_{k}\phi_k^2+\frac{48P_k^2\beta_k^2\phi_k^4\bar{\lambda}^2}{\underline{f}^2}+\frac{128P_k^3\beta_k^3\phi_k^6\bar{\lambda}^3}{\underline{f}^3}+\frac{128P_k^4\beta_k^4\phi_k^8\bar{\lambda}^4}{\underline{f}^4}<1-3\underline{\lambda}P_k\beta_k\phi_k^2$. So by Lemma \ref{zenmelaide}, we get
		\begin{align}\label{Ex4_compare}
			E\widetilde{\theta}_k^4
			\leq&\prod_{i=i_2}^{k}\left(1-P_i\beta_{i}\phi_i^2\right)^{3\underline{\lambda}}E\widetilde{\theta}_{i_2-1}^2\nonumber\\
			&+\sum_{l=i_2}^{k-1}\prod_{i=l+1}^{k}\left(\left(1-P_i\beta_{i}\phi_i^2\right)^{3\underline{\lambda}}\right)O\left(\frac{1}{l^2}\right)\nonumber\\
			=&
			\left\{
			\begin{aligned}
				&O\left(\frac{1}{k^{3\underline{\lambda}}}\right), &3\underline{\lambda}&<1,\\
				&O\left(\frac{\log k}{k}\right), &3\underline{\lambda}&=1,\\
				&O\left(\frac{1}{k}\right), &3\underline{\lambda}&>1.\\
			\end{aligned}
			\right.
		\end{align}
		Combining (\ref{Ex2}) and (\ref{Ex4_compare}), we know that 
		\begin{align}\label{Etheta_gamma}
			E\left[\widetilde{\theta}_{k-1}^2\gamma_k^2\right]=
			\left\{
			\begin{aligned}
				&O\left(\frac{1}{k^{2\underline{\lambda}}}\right), &2\underline{\lambda}&<1,\\
				&O\left(\frac{\log k}{k}\right), &2\underline{\lambda}&=1,\\
				&O\left(\frac{1}{k}\right), &2\underline{\lambda}&>1.\\
			\end{aligned}
			\right.
		\end{align}		
		For the second term of (\ref{high_E}), similarly to the proof of (\ref{mean-value}), there exists $\acute{f}_k$ such that $4E\widetilde{\theta}_{k-1}^3\hat{P}_k\hat{\alpha}_k\phi_k\gamma_k(\hat{F}_k-s_k)=-4E\widetilde{\theta}_{k-1}^3\hat{P}_k\hat{\alpha}_k\phi_k\gamma_k\widetilde{z}_k\acute{f}_k$. 
		By (\ref{Etheta_gamma}) and Lemma \ref{zenmelaide}, (\ref{high_E}) can be written as 
		\begin{align}\label{Ex4}
			E\widetilde{\theta}_k^4
			\leq&E\widetilde{\theta}_{k-1}^4-4E\widetilde{\theta}_{k-1}^4P_k\beta_k\lambda_k\phi_k^2+O\left(\frac{1}{k^2}E\left[\widetilde{\theta}_{k-1}^2\gamma_k^2\right]\right)\nonumber\\
			\leq&\prod_{i=1}^{k}(1-P_i\beta_{i}\phi_i^2)^{4\underline{\lambda}}E\widetilde{\theta}_{0}^4\nonumber\\
			&+\sum_{l=1}^{k-1}\prod_{i=l+1}^{k}(1-P_i\beta_{i}\phi_i^2)^{4\underline{\lambda}}O\left(\frac{1}{l^2}E\left[\widetilde{\theta}_{l-1}^2\gamma_l^2\right]\right)\nonumber\\
			=&\left\{
			\begin{aligned}
				&O\left(\frac{1}{k^{4\underline{\lambda}}}\right), &2\underline{\lambda}&<1,\\
				&O\left(\left(\frac{\log k}{k}\right)^2\right), &2\underline{\lambda}&=1,\\
				&O\left(\frac{1}{k^2}\right), &2\underline{\lambda}&>1.\\
			\end{aligned}
			\right.   		
		\end{align}
		Similarly, for $m\geq3$,
		\begin{align}\label{E2m}
			E\widetilde{\theta}_k^{2m}=
			\left\{
			\begin{aligned}
				&O\left(\frac{1}{k^{2m\underline{\lambda}}}\right), &2\underline{\lambda}&<1,\\
				&O\left(\left(\frac{\log k}{k}\right)^m\right), &2\underline{\lambda}&=1,\\
				&O\left(\frac{1}{k^m}\right), &2\underline{\lambda}&>1.\\
			\end{aligned}
			\right.
		\end{align}
		Thus, Step 1 is proved  with $\mu=\underline{\lambda}$.

		\textbf{Step 2:} We will prove $E\widetilde{\theta}_k^2=O\left(\frac{1}{k}\right)$ by discussing the different situations of $\underline{\lambda}$. By (\ref{Ex2}), we have
		\begin{equation*}
			\begin{aligned}
				E\widetilde{\theta}_k^{2}=
				\left\{
				\begin{aligned}
					&O\left(\frac{1}{k^{2\underline{\lambda}}}\right), &2\underline{\lambda}&<1,\\
					&O\left(\frac{\log k}{k}\right), &2\underline{\lambda}&=1,\\
					&O\left(\frac{1}{k}\right), &2\underline{\lambda}&>1.\\
				\end{aligned}
				\right.\\
			\end{aligned}
		\end{equation*}
		 By the proof of Step 1, we know that $\underline{\lambda}>0$. Hence for every $\underline{\lambda}$,  there exists $m\geq1$ such that $2m\underline{\lambda}>1, 2(m-1)\underline{\lambda}\leq1$.
		
		\textbf{Case 1}\quad if $m=1$, i.e. $2\underline{\lambda}>1,$ then $E\widetilde{\theta}_k^2=O\left(\frac{1}{k}\right)$.
		
		\textbf{Case 2}\quad if $m=2$, i.e. $4\underline{\lambda}>1$, $2\underline{\lambda}\leq1$.
		
		If $2\underline{\lambda}=1$, there exists $\delta\in(0,\frac{1}{2})$ such that $E\widetilde{\theta}_k^2=O(\frac{1}{k^{1-\delta}})$. Hence, according to (\ref{Ex4}), we have 
		\begin{equation*}
			\begin{aligned}
				E\widetilde{\theta}_k^4\leq&\prod_{i=1}^{k}(1-P_i\beta_{i}\phi_i^2)^{4\underline{\lambda}}E\widetilde{\theta}_0^4\\
				&+\sum_{l=1}^{k-1}\prod_{i=l+1}^{k}(1-P_i\beta_{i}\phi_i^2)^{4\underline{\lambda}}\frac{1}{l^2}E\widetilde{\theta}_{l-1}^2
				=O\left(\frac{1}{k^{2-\delta}}\right).
			\end{aligned}
		\end{equation*}
		We will prove that $|1-\lambda_k|=O\left(|\widetilde{\theta}_{k-1}|+\big|\frac{1}{k}\sum_{i=1}^{k}\widetilde{\theta}_{i-1}\big|\right)$. First we get
		\begin{align*}
			|1-\lambda_k|
			=&\frac{1}{\hat{f}_k}(\hat{f}_k-\check{f}_k)+\hat{\alpha}_k\check{f}_k\left(\frac{\beta_k-\hat{\beta}_k}{\beta_k\hat{\beta}_k}\right)\\&+\frac{\hat{P}_k\hat{\alpha}_k}{\beta_k}\check{f}_k\left(\frac{1}{\hat{P}_k}-\frac{1}{P_k}\right).
		\end{align*}
		By Mean-Value theorem in \cite{Apostol:105425}, there exists $\breve{\zeta}_k$ between $C-z_k$ and $\zeta_{k-1}$ such that  $\big|\frac{1}{\hat{f}_k}(\hat{f}_k-\check{f}_k)\big|=\big|\frac{f'(\breve{\zeta}_k)}{\hat{f}_k}\left(C-z_k-\zeta_{k-1}\right)\big|\leq\big|\frac{f'(\breve{\zeta}_k)}{\hat{f}_k}\phi_k\widetilde{\theta}_{k-1}\big|=O\left(|\widetilde{\theta}_{k-1}|\right)$.
		Define $\beta(x)=\frac{f^2(x)}{F(x)(1-F(x))}$.
		There exists $\check{\zeta}_k$ between $C-\phi_k\theta$ and $C-z_k$ such that  $\beta_k-\hat{\beta}_k=\beta(C-\phi_k\theta)-\beta(C-z_k)=-\beta'(\check{\zeta}_k)\widetilde{z}_k$. The boundedness of $\phi_k\theta$ and $z_k$ yields that $\beta'(\check{\zeta}_k)$ is bounded. 
		Then we have  $|\beta_k-\hat{\beta}_k|=|-\beta'(\check{\zeta}_k)\widetilde{z}_k|\leq|\beta'(\check{\zeta}_k)|\cdot|\phi_k\widetilde{\theta}_{k-1}|=O\left(|\widetilde{\theta}_{k-1}|\right)$. Similarly, we have 
		$\hat{P}_k(\frac{1}{\hat{P}_k}-\frac{1}{P_k})=O\left(\frac{1}{k}\sum_{i=1}^{k}|\widetilde{\theta}_{i-1}|\right)$. Hence $1-\lambda_k=O\left(|\widetilde{\theta}_{k-1}|+\frac{1}{k}\sum_{i=1}^{k}|\widetilde{\theta}_{i-1}|\right)$. 
	    Since $E\widetilde{\theta}_k^2=O(\frac{1}{k^{1-\delta}})$, $\frac{1-\delta}{2}<1$, we have $\frac{1}{k}\sum_{i=1}^{k}\sqrt{E\widetilde{\theta}_{i-1}^2}=O\left(\frac{1}{k}\sum_{i=1}^{k}\left(\frac{1}{i}\right)^{\frac{1-\delta}{2}}\right)=O\left(\frac{1}{k^{(1-\delta)/2}}\right)$. Then, (\ref{Ex2}) can be rewritten as		
		\begin{align}\label{jiasu}
			E\widetilde{\theta}_k^2
			=&(1-2P_k\beta_{k}\phi_k^2)E\widetilde{\theta}_{k-1}^2+O\left(\frac{1}{k}\right)\nonumber\\
			&\cdot E\left[\left(|\widetilde{\theta}_{k-1}|+\frac{\sum_{i=1}^{k}|\widetilde{\theta}_{i-1}|}{k}\right)\widetilde{\theta}_{k-1}^2\right]+O\left(\frac{1}{k^2}\right)\nonumber\\
			\leq&(1-2P_k\beta_{k}\phi_k^2)E\widetilde{\theta}_{k-1}^2+O\left(\frac{1}{k}\right)E\left[|\widetilde{\theta}_{k-1}|\widetilde{\theta}_{k-1}^2\right]\nonumber\\
			&+O\left(\frac{1}{k}\right)\sum_{i=1}^{k}E\left[\frac{|\widetilde{\theta}_{i-1}|}{k}\widetilde{\theta}_{k-1}^2\right]+O\left(\frac{1}{k^2}\right)\nonumber\\
			\leq&(1-2P_k\beta_{k}\phi_k^2)E\widetilde{\theta}_{k-1}^2+O\left(\frac{1}{k}\right)\sqrt{E\widetilde{\theta}_{k-1}^2E\widetilde{\theta}_{k-1}^4}\nonumber\\
			&+O\left(\frac{1}{k}\right)\sqrt{E\widetilde{\theta}_{k-1}^4}\cdot\frac{1}{k}\sum_{i=1}^{k}\sqrt{E\widetilde{\theta}_{i-1}^2}+O\left(\frac{1}{k^2}\right)\nonumber\\
			\leq& (1-2P_k\beta_{k}\phi_k^2)E\widetilde{\theta}_{k-1}^2\nonumber\displaybreak\\
			&+O\left(\frac{1}{k^{1+(1-\delta)/2+(2-\delta)/2}}\right)+O\left(\frac{1}{k^2}\right)\nonumber\\
			=&\prod_{i=1}^{k}(1-P_i\beta_{i}\phi_i^2)^2E\widetilde{\theta}_0^2+\sum_{l=1}^{k-1}\prod_{i=l+1}^{k}(1-P_i\beta_{i}\phi_i^2)^2O\left(\frac{1}{l^2}\right)\nonumber\\
			=&O\left(\frac{1}{k}\right).
		\end{align}
		If $2\underline{\lambda}<1$, $E\widetilde{\theta}_k^2=O\left(\frac{1}{k^{2\underline{\lambda}}}\right)$. Then $\frac{1}{k}\sum_{i=1}^{k}\sqrt{E\widetilde{\theta}_{i-1}^2}
		=O\left(\frac{1}{k^{\underline{\lambda}}}\right)$. By (\ref{E2m}), we get $E\widetilde{\theta}_k^4=O\left(\frac{1}{k^{4\underline{\lambda}}}\right)$. Substituting it into (\ref{jiasu}), we have 
		\begin{align*}
			E\widetilde{\theta}_k^2\leq&(1-2P_k\beta_{k}\phi_k^2)E\widetilde{\theta}_{k-1}^2+O\left(\frac{1}{k}\right)\sqrt{E\widetilde{\theta}_{k-1}^2E\widetilde{\theta}_{k-1}^4}\\
			&+O\left(\frac{1}{k}\right)\sqrt{E\widetilde{\theta}_{k-1}^4}\cdot\frac{1}{k}\sum_{i=1}^{k}\sqrt{E\widetilde{\theta}_{i-1}^2}+O\left(\frac{1}{k^2}\right)\\
			=&O\left(\frac{1}{k^2}\right)+O\left(\sum_{i=1}^{k}\left(\frac{i}{k}\right)^2\frac{1}{i^{1+2\underline{\lambda}+\underline{\lambda}}}\right)+O\left(\sum_{i=1}^{k}\left(\frac{i}{k}\right)^2\frac{1}{i^2}\right)\\
			=&\left\{
			\begin{aligned}
				&O\left(\frac{1}{k^{3\underline{\lambda}}}\right), &3\underline{\lambda}&<1,\\
				&O\left(\frac{1}{k}\right), &3\underline{\lambda}&\geq1.\\
			\end{aligned}
			\right.\\
		\end{align*}		
		If $3\underline{\lambda}\geq1$, $E\widetilde{\theta}_k^2=O\left(\frac{1}{k}\right)$ holds. If $3\underline{\lambda}<1$, then $E\widetilde{\theta}_k^2=O\left(\frac{1}{k^{3\underline{\lambda}}}\right)$. So $\frac{1}{k}\sum_{i=1}^{k}\sqrt{E\widetilde{\theta}_{i-1}^2}
		=O\left(\frac{1}{k^{3\underline{\lambda}/2}}\right)$. Therefore we have
		\begin{align*}
				E\widetilde{\theta}_k^2\leq&(1-2P_k\beta_{k}\phi_k^2)E\widetilde{\theta}_{k-1}^2+O\left(\frac{1}{k^{1+2\underline{\lambda}}}\right)\\
				&\cdot\left(\sqrt{E\widetilde{\theta}_{k-1}^2}+\frac{1}{k}\sum_{i=1}^{k}\sqrt{E\widetilde{\theta}_{i-1}^2}\right)+O\left(\frac{1}{k^2}\right)\\
				\leq&\prod_{i=1}^{k}\left(1-P_i\beta_{i}\phi_i^2\right)^2E\widetilde{\theta}_0^2+\sum_{l=1}^{k-1}\prod_{i=l+1}^{k}\left(1-\underline{\lambda}P_i\beta_{i}\phi_i^2\right)^2\\
				&\cdot\left(O\left(\frac{1}{l^{1+4\underline{\lambda}-2^{-1}\underline{\lambda}}}\right)+O\left(\frac{1}{l^2}\right)\right)\\
				=&\left\{
				\begin{aligned}
					&O\left(\frac{1}{k^{\frac{7}{2}\underline{\lambda}}}\right), &\frac{7}{2}\underline{\lambda}&<1,\\
					&O\left(\frac{1}{k}\right), &\frac{7}{2}\underline{\lambda}&\geq1.\\
				\end{aligned}
				\right.\\
		\end{align*}
		Repeating the above steps for $t$ times, we have 
		\begin{equation*}
			\begin{aligned}
				E\widetilde{\theta}_k^2=\left\{
				\begin{aligned}
					&O\left(\frac{1}{k^{4\underline{\lambda}-2^{-t}\underline{\lambda}}}\right), &4\underline{\lambda}-2^{-t}\underline{\lambda}&<1,\\
					&O\left(\frac{1}{k}\right), &4\underline{\lambda}-2^{-t}\underline{\lambda}&\geq1.
				\end{aligned}
				\right.
			\end{aligned}
		\end{equation*}
		Since $4\underline{\lambda}>1$, there exists $t_1$, such that $4\underline{\lambda}-2^{-t_1}\underline{\lambda}\geq1$, then  $E\widetilde{\theta}_k^2=O\left(\frac{1}{k}\right)$.

		\textbf{Case 3}\quad If $m=3$, then $6\underline{\lambda}>1$, and $4\underline{\lambda}\leq1$. (\ref{Ex4}) can be rewritten as 
		\begin{align}\label{Ex4jiasu}
			E\widetilde{\theta}_k^4
			\leq&E\widetilde{\theta}_{k-1}^4-4P_k\beta_{k}\phi_k^2E\widetilde{\theta}_{k-1}^4\nonumber\\
			&+4P_k\beta_{k}\phi_k^2E\left[(1-\lambda_k)\widetilde{\theta}_{k-1}^4\right]+O\left(\frac{1}{k^2}E\widetilde{\theta}_{k-1}^2\right)\nonumber\\
			\leq&(1-4P_k\beta_{k}\phi_k^2)E\widetilde{\theta}_{k-1}^4+O\left(\frac{1}{k}\right)E\left[|\widetilde{\theta}_{k-1}|\widetilde{\theta}_{k-1}^4\right]\nonumber\\
			&+O\left(\frac{1}{k}\right)\sum_{i=1}^{k}E\left[\frac{|\widetilde{\theta}_{i-1}|}{k}\widetilde{\theta}_{k-1}^4\right]+O\left(\frac{1}{k^2}E\widetilde{\theta}_{k-1}^2\right)\nonumber\\
			\leq&(1-4P_k\beta_{k}\phi_k^2)E\widetilde{\theta}_{k-1}^4+O\left(\frac{1}{k}\right)\sqrt{E\widetilde{\theta}_{k-1}^4E\widetilde{\theta}_{k-1}^6}\nonumber\\
			&+O\left(\frac{1}{k}\right)\sqrt{E\widetilde{\theta}_{k-1}^6}\sqrt[4]{E\widetilde{\theta}_{k-1}^4}\cdot\frac{1}{k}\sum_{i=1}^{k}\sqrt[4]{E\widetilde{\theta}_{i-1}^4}\nonumber\\
			&+O\left(\frac{1}{k^2}E\widetilde{\theta}_{k-1}^2\right).
		\end{align}
		By (\ref{E2m}), we know that $E\widetilde{\theta}_k^4=O\left(\frac{1}{k^{4\underline{\lambda}}}\right)$, $E\widetilde{\theta}_k^6=O\left(\frac{1}{k^{6\underline{\lambda}}}\right)$. Hence $\frac{1}{k}\sum_{i=1}^{k}\sqrt[4]{E\widetilde{\theta}_{i-1}^4}
		=O\left(\frac{1}{k^{\underline{\lambda}}}\right)$. And we know that $E\widetilde{\theta}_k^2=O\left(\frac{1}{k^{4\underline{\lambda}-2^{-t_1}\underline{\lambda}}}\right)$ since $4\underline{\lambda}-2^{-t_1}\underline{\lambda}<1$. 
		Therefore we know that
		\begin{align}
			E\widetilde{\theta}_k^4
			\leq&\prod_{i=1}^{k}(1-P_k\beta_{k}\phi_k^2)^4E\widetilde{\theta}_0^2+\sum_{l=1}^{k-1}\prod_{i=l+1}^{k}(1-P_k\beta_{k}\phi_k^2)^4\nonumber\\
			&\cdot\left(O\left(\frac{1}{l^{1+5\underline{\lambda}}}\right)+O\left(\frac{1}{l^{2+4\underline{\lambda}-2^{-t_1}\underline{\lambda}}}\right)\right)=O\left(\frac{1}{k^{5\underline{\lambda}}}\right).\nonumber
		\end{align}
		By substituting $E\widetilde{\theta}_k^4=O\left(\frac{1}{k^{5\underline{\lambda}}}\right)$ into (\ref{Ex4jiasu}) and repeating this step for $t_2$ times, we have $E\widetilde{\theta}_k^4=O\left(\frac{1}{k^{6\underline{\lambda}-2^{-t_2}\underline{\lambda}}}\right)$. 
		Then $\frac{1}{k}\sum_{i=1}^{k}\sqrt{E\widetilde{\theta}_{i-1}^2}=O\left(\frac{1}{k^{2\underline{\lambda}-2^{-(t_1+1)}\underline{\lambda}}}\right)$. Hence, 
		\begin{align*}
				E\widetilde{\theta}_k^2\leq&(1-2P_k\beta_{k}\phi_k^2)E\widetilde{\theta}_{k-1}^2+O\left(\frac{1}{k}\right)\sqrt{E\widetilde{\theta}_{k-1}^2E\widetilde{\theta}_{k-1}^4}\\
				&+O\left(\frac{1}{k}\right)\sqrt{E\widetilde{\theta}_{k-1}^4}\cdot\frac{1}{k}\sum_{i=1}^{k}\sqrt{E\widetilde{\theta}_{i-1}^2}+O\left(\frac{1}{k^2}\right)\\
				=&O\left(\frac{1}{k^2}\right)+O\left(\sum_{i=1}^{k}\left(\frac{i}{k}\right)^2\frac{1}{i^{1+3\underline{\lambda}-2^{-(t_2+1)}\underline{\lambda}+2\underline{\lambda}-2^{-(t_1+1)}\underline{\lambda}}}\right)\\
				&+O\left(\sum_{i=1}^{k}\left(\frac{i}{k}\right)^2\frac{1}{i^2}\right)\displaybreak\\
				=&\left\{
				\begin{aligned}
					&O\left(\frac{1}{k^{v_1}}\right), &v_1&<1,\\
					&O\left(\frac{1}{k}\right), &v_1&\geq1,
				\end{aligned}
				\right.
		\end{align*}
		where $v_1=5\underline{\lambda}-2^{-(t_2+1)}\underline{\lambda}-2^{-(t_1+1)}\underline{\lambda}$. If $v_1\geq1$, then $E\widetilde{\theta}_k^2=O\left(\frac{1}{k}\right)$ holds. Otherwise, substituting $E\widetilde{\theta}_k^2=O\left(\frac{1}{k^{v_1}}\right)$ into the equation above and repeating the above steps for $t_3$ times, we have 
		\begin{equation*}
			\begin{aligned}
				E\widetilde{\theta}_k^2=\left\{
				\begin{aligned}
					&O\left(\frac{1}{k^{v_2}}\right), &v_2&<1,\\
					&O\left(\frac{1}{k}\right), &v_2&\geq1,\\
				\end{aligned}
				\right.\\
			\end{aligned}
		\end{equation*}
		where $v_2=6\underline{\lambda}-2^{-t_2}\underline{\lambda}-2^{-t_3}(2\underline{\lambda}+2^{-t_1}\underline{\lambda}-2^{-t_2}\underline{\lambda})$. Since $6\underline{\lambda}>1$, there exists $t_1,t_2,t_3$, such that $v_2\geq1$. $E\widetilde{\theta}_k^2=O\left(\frac{1}{k}\right)$ holds.
		
		When $m>3$, similarly we can prove that $E\widetilde{\theta}_k^2=O\left(\frac{1}{k}\right)$.
		Finally, we prove that $E\widetilde{\theta}_k^2=O\left(\frac{1}{k}\right)$.
		\end{proof}
	
	The following theorem shows the asymptotic efficiency of the IMPF algorithm. In other words, the covariance of estimation error converges to the CR lower bound $\Delta_k$, where $\Delta_k=\left(\sum_{i=1}^{k}\frac{f_i^2}{F_i(1-F_i)}\phi_i\phi_i^T\right)^{-1}$ is given in Lemma \ref{lemmaefficiency} in Appendix.
	\begin{theorem}\label{theorem efficiency}
		For the first-order FIR system (\ref{M}) with binary-valued observations (\ref{moxingtwo}), if Assumptions \ref{A1}-\ref{A3} hold, the IMPF algorithm is asymptotically efficient, i.e.,
		\begin{align*}
			\lim\limits_{k\to\infty}\Delta_k^{-1}(E\widetilde{\theta}_k^2-\Delta_k)=0,
		\end{align*}
	where $\Delta_k=\left(\sum_{i=1}^{k}\frac{f_i^2}{F_i(1-F_i)}\phi_i\phi_i^T\right)^{-1}.$ 
	\end{theorem}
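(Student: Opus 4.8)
The plan is to compare the mean-square error $V_k:=E\widetilde{\theta}_k^2$ with the ideal covariance $P_k$ from (\ref{Pbeta}), which is built from the true weights $\beta_k=f_k^2/[F_k(1-F_k)]$, and to exploit the splitting
\begin{align*}
\Delta_k^{-1}(V_k-\Delta_k)=\Delta_k^{-1}(V_k-P_k)+\Delta_k^{-1}(P_k-\Delta_k).
\end{align*}
Since (\ref{Pbeta}) gives $P_k^{-1}=\sum_{i=1}^{k}\beta_i\phi_i^2+P_0^{-1}=\Delta_k^{-1}+P_0^{-1}$, the second term equals $-P_0^{-1}/(\Delta_k^{-1}+P_0^{-1})$, which tends to $0$ because $\Delta_k^{-1}\to\infty$ by Assumption \ref{A3}. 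Noting $\Delta_k^{-1}=\Theta(k)$, it therefore suffices to prove $V_k-P_k=o(\Delta_k)=o(1/k)$.

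First I would square the recursion (\ref{eq2}), take expectations, and reproduce (\ref{Beq1}) keeping track of the exact leading terms instead of only upper bounds. On the event $\{|\phi_k\hat{\theta}_{k-1}|\le M\}$ one has $\gamma_k=1$, $z_k=\phi_k\hat{\theta}_{k-1}$, $\widetilde{z}_k=\phi_k\widetilde{\theta}_{k-1}$, and the mean-value device of (\ref{mean-value}) together with $E[(\hat{F}_k-s_k)^2\mid\mathcal{F}_{k-1}]=F_k(1-F_k)+(\hat{F}_k-F_k)^2$ produces, after using the identities $\alpha_kf_k=\beta_k$ and $\alpha_k^2F_k(1-F_k)=\beta_k$, the leading recursion
\begin{align*}
V_k=(1-2P_k\beta_k\phi_k^2)V_{k-1}+P_k^2\beta_k\phi_k^2+\epsilon_k,
\end{align*}
where $\epsilon_k$ gathers the clipped-event contributions $\{|\phi_k\hat{\theta}_{k-1}|>M\}$ and the replacement errors from substituting $\hat{P}_k,\hat{\alpha}_k,\hat{\beta}_k,f(\zeta_k),\gamma_k$ by their ideal counterparts $P_k,\alpha_k,\beta_k,f_k,1$.

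The core consists of two estimates. First, $P_k$ is an approximate fixed point: using $1-P_i\beta_i\phi_i^2=P_i/P_{i-1}$ (from (\ref{Pbeta})) a direct computation gives $P_k-[(1-2P_k\beta_k\phi_k^2)P_{k-1}+P_k^2\beta_k\phi_k^2]=(\beta_k\phi_k^2)^2P_{k-1}P_k^2=O(1/k^3)$ by (\ref{P_k}). Second, $\epsilon_k$ must be bounded. The clipped event satisfies $\{|\phi_k\hat{\theta}_{k-1}|>M\}\subseteq\{|\widetilde{\theta}_{k-1}|>2/\bar{\phi}\}$, so by Markov's inequality with the moment bounds $E\widetilde{\theta}_k^{2m}=O(1/k^m)$ and $E|\gamma_k|^t<\infty$ from Step 1 of Theorem \ref{1k}, those contributions are $O(1/k^m)$ for every $m$ and hence negligible. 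The replacement errors are controlled via $|1-\lambda_k|=O(|\widetilde{\theta}_{k-1}|+\frac{1}{k}\sum_{i=1}^{k}|\widetilde{\theta}_{i-1}|)$, proved in Theorem \ref{1k}, together with Cauchy--Schwarz; the dominant one has the form $P_k\beta_k\phi_k^2\,E[(\lambda_k-1)\widetilde{\theta}_{k-1}^2]=O(1/k)\cdot O(1/k^{3/2})$, since $E|\widetilde{\theta}_{k-1}|^3\le\sqrt{E\widetilde{\theta}_{k-1}^2\,E\widetilde{\theta}_{k-1}^4}=O(1/k^{3/2})$ and the averaged term is of the same order. Thus $\epsilon_k=O(1/k^{5/2})$.

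Setting $e_k:=V_k-P_k$, the two estimates yield $e_k=(1-2P_k\beta_k\phi_k^2)e_{k-1}+O(1/k^{5/2})$. Iterating and applying the Bernoulli bound $1-2P_i\beta_i\phi_i^2\le(1-P_i\beta_i\phi_i^2)^2=(P_i/P_{i-1})^2$, so that $\prod_{i=l+1}^{k}(1-2P_i\beta_i\phi_i^2)\le(P_k/P_l)^2=O((l/k)^2)$, I obtain
\begin{align*}
|e_k|\le O\!\left(\tfrac{1}{k^2}\right)+\frac{1}{k^2}\sum_{l=1}^{k}O\!\left(l^2\cdot l^{-5/2}\right)=O(1/k^{3/2})=o(1/k),
\end{align*}
which together with the first paragraph gives $\Delta_k^{-1}(V_k-\Delta_k)\to0$. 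The main obstacle is the error bookkeeping: the contraction factor is only $1-2P_k\beta_k\phi_k^2\approx1-2/k$, so per-step errors are barely damped and the accumulated sum $\frac{1}{k^2}\sum l^2\epsilon_l$ is $o(1/k)$ only if $\epsilon_l$ is strictly sub-$l^{-2}$. Securing the rate $\epsilon_l=O(l^{-5/2})$ rather than the naive $O(l^{-2})$ hinges on the third-moment bound $E|\widetilde{\theta}_{l-1}|^3=O(l^{-3/2})$ and the averaged weight-error estimate of Theorem \ref{1k}; pushing every error term below order $k^{-2}$ is the delicate step.
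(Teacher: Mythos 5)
Your proposal is correct and follows essentially the same route as the paper's proof: both square the recursion (\ref{eq2}), apply the mean-value expansion to $\hat{F}_k-s_k$, replace the adaptive quantities $\hat{P}_k,\hat{\alpha}_k,\hat{\beta}_k,f(\zeta_k),\gamma_k$ by their ideal counterparts with errors of order $|\widetilde{\theta}_{k-1}|$ (controlled via the $\lambda_k$-estimates and moment bounds of Theorem \ref{1k}, and clipped-event terms via Markov's inequality), and unroll the resulting recursion using $1-P_i\beta_i\phi_i^2=P_i/P_{i-1}$ so that the forcing term reproduces $\Delta_k$ up to $o\left(\frac{1}{k}\right)$. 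Your reorganization---subtracting the approximate fixed point $P_k$, tracking $e_k=E\widetilde{\theta}_k^2-P_k$, and explicitly handling the $P_0^{-1}$ offset between $P_k$ and $\Delta_k$---is a bookkeeping variant of the paper's direct unrolling of $E\widetilde{\theta}_k^2\leq E\widetilde{\theta}_{k-1}^2P_k^2P_{k-1}^{-2}+\phi_k^2P_k^2\beta_k+o\left(\frac{1}{k^2}\right)$, not a different method.
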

	\begin{proof}
		Similarly to the proof of Theorem \ref{theorem1}, according to Mean-Value theorem in \cite{Apostol:105425}, we have	
			\begin{align*}
				|\widetilde{\theta}_k|=&|\widetilde{\theta}_{k-1}+\hat{P}_k\hat{\alpha}_k\phi_k\gamma_k(\hat{F}_k-F_k+F_k-s_k)|\\
				=&|\widetilde{\theta}_{k-1}-\hat{P}_k\hat{\alpha}_k\phi_k\widetilde{z}_k\gamma_k\check{f}_k+\hat{P}_k\hat{\alpha}_k\phi_k\gamma_k(F_k-s_k)|\\
				=&|\widetilde{\theta}_{k-1}(1-\hat{P}_k\hat{\alpha}_k\phi_k^2\check{f}_k)+\hat{P}_k\hat{\alpha}_k\phi_k\check{f}_k(\phi_k\widetilde{\theta}_{k-1}-\widetilde{z}_k\gamma_k)\\
				&+\hat{P}_k\hat{\alpha}_k\phi_k\gamma_k(F_k-s_k)|\\
				=&|\widetilde{\theta}_{k-1}(1-\hat{P}_k\hat{\alpha}_k\phi_k^2\hat{f}_k)+\widetilde{\theta}_{k-1}\hat{P}_k\hat{\alpha}_k\phi_k^2(\hat{f}_k-\check{f}_k)\\
				&+\hat{P}_k\hat{\alpha}_k\phi_k\check{f}_k(\phi_k\widetilde{\theta}_{k-1}-\widetilde{z}_k\gamma_k)+\hat{P}_k\hat{\alpha}_k\phi_k\gamma_k(F_k-s_k)|\\
				=&\bigg|\widetilde{\theta}_{k-1}(1-\frac{\hat{\beta}_k\hat{P}_k\phi_k^2}{1+\hat{\beta}_k\hat{P}_k\phi_k^2})-\widetilde{\theta}_{k-1}\frac{(\hat{\beta}_k\hat{P}_k\phi_k^2)^2}{1+\hat{\beta}_k\hat{P}_k\phi_k^2}\\
				&+\widetilde{\theta}_{k-1}\hat{P}_k\hat{\alpha}_k\phi_k^2(\hat{f}_k-\check{f}_k)+\hat{P}_k\hat{\alpha}_k\phi_k\check{f}_k(\phi_k\widetilde{\theta}_{k-1}-\widetilde{z}_k\gamma_k)\\
				&+\hat{P}_k\hat{\alpha}_k\phi_k\gamma_k(F_k-s_k)\bigg|,
			\end{align*}
		where $\check{f}_k=f(\zeta_{k-1})$ and $\widetilde{z}_k$ are the same with  (\ref{shelambda}). 
		We know that $E\big|\widetilde{\theta}_{k-1}\frac{(\hat{\beta}_k\hat{P}_{k-1}\phi_k^2)^2}{1+\hat{\beta}_k\hat{P}_{k-1}\phi_k^2}\big|=o\left(\frac{1}{k^2}\right)$
		and $\hat{P}_k\hat{P}_{k-1}^{-1}=1-\frac{\hat{\beta}_k\hat{P}_{k-1}\phi_k^2}{1+\hat{\beta}_k\hat{P}_{k-1}\phi_k^2}$. Combining with $E(F_k-s_k)^2=F_k(1-F_k)$, it can be seen that
		\begin{align*}
			E\widetilde{\theta}_k^2			=&E\left[\widetilde{\theta}_{k-1}\left(1-\frac{\hat{\beta}_k\hat{P}_k\phi_k^2}{1+\hat{\beta}_k\hat{P}_k\phi_k^2}\right)-\widetilde{\theta}_{k-1}\frac{(\hat{\beta}_k\hat{P}_k\phi_k^2)^2}{1+\hat{\beta}_k\hat{P}_k\phi_k^2}\right.\\
			&+\widetilde{\theta}_{k-1}\hat{P}_k\hat{\alpha}_k\phi_k^2(\hat{f}_k-\check{f}_k)+\hat{P}_k\hat{\alpha}_k\phi_k\check{f}_k(\phi_k\widetilde{\theta}_{k-1}-\widetilde{z}_k\gamma_k)\\
			&\left.+\hat{P}_k\hat{\alpha}_k\phi_k\gamma_k(F_k-s_k)\right]^2\\
			\leq&E(\widetilde{\theta}_{k-1}\hat{P}_k\hat{P}_{k-1}^{-1})^2+E\hat{P}_k^2\hat{\alpha}_k^2\phi_k^2\check{f}_k^2(\phi_k\widetilde{\theta}_{k-1}-\widetilde{z}_k\gamma_k)^2\\
			&+F_k(1-F_k)\phi_k^2E\hat{P}_k^2\hat{\alpha}_k^2\gamma_k^2\\
			&+2E|\widetilde{\theta}_{k-1}^2\hat{P}_{k}\hat{\alpha}_k\phi_k^2(\hat{f}_k-\check{f}_k)|\\
			&+2E|\widetilde{\theta}_{k-1}\hat{P}_{k}\hat{\alpha}_k\phi_k\check{f}_k(\phi_k\widetilde{\theta}_{k-1}-\widetilde{z}_k\gamma_k)|\\
			&+2E\widetilde{\theta}_{k-1}\hat{P}_k^2\hat{\alpha}_k^2\phi_k^3\check{f}_k(\hat{f}_k-\check{f}_k)(\phi_k\widetilde{\theta}_{k-1}-\widetilde{z}_k\gamma_k)\\
			&+o\left(\frac{1}{k^2}\right).
		\end{align*}
		From Assumption \ref{A2}, Mean-Value theorem in \cite{Apostol:105425} and the boundedness of $z_k$, we have  $|E(\hat{f}_k-\check{f}_k)|=O(E|\widetilde{\theta}_{k-1}|)\leq O(\sqrt{E\widetilde{\theta}_{k-1}^2})=O(\frac{1}{\sqrt{k}})$.
		Since $P_k=O\left(\frac{1}{k}\right)$ and  $E\widetilde{\theta}_k^2=O\left(\frac{1}{k}\right)$, we have
		\begin{align}\label{zuiyou1}
			E\widetilde{\theta}_k^2\leq&E(\widetilde{\theta}_{k-1}\hat{P}_k\hat{P}_{k-1}^{-1})^2+E\hat{P}_k^2\hat{\alpha}_k^2\phi_k^2\check{f}_k^2(\phi_k\widetilde{\theta}_{k-1}-\widetilde{z}_k\gamma_k)^2\nonumber\\
			&+F_k(1-F_k)\phi_k^2E\hat{P}_k^2\hat{\alpha}_k^2\gamma_k^2\nonumber\\
			&+2E\widetilde{\theta}_{k-1}\hat{P}_{k}\hat{\alpha}_k\phi_k\check{f}_k(\phi_k\widetilde{\theta}_{k-1}-\widetilde{z}_k\gamma_k)\nonumber\\
			&+o\left(\frac{1}{k^2}\right).
		\end{align}		
		Similarly to the proof of Theorem \ref{1k}, 
		we know $|\beta_k-\hat{\beta}_k|=O(|\widetilde{\theta}_{k-1}|)$ and  $|\alpha_k-\hat{\alpha}_k|=O(|\widetilde{\theta}_{k-1}|)$.
		Then for the first term on RHS of (\ref{zuiyou1}), we have		
		\begin{align}\label{high_1}
			&E(\widetilde{\theta}_{k-1}\hat{P}_k\hat{P}_{k-1}^{-1})^2\nonumber\\
			=&E\widetilde{\theta}_{k-1}^2P_k^2P_{k-1}^{-2}+E\widetilde{\theta}_{k-1}^2(\hat{P}_k\hat{P}_{k-1}^{-1}+P_kP_{k-1}^{-1})\nonumber\\
			&\cdot\left(\frac{P_{k-1}\beta_k\phi_k^2}{1+P_{k-1}\beta_k\phi_k^2}-\frac{\hat{P}_{k-1}\hat{\beta}_k\phi_k^2}{1+\hat{P}_{k-1}\hat{\beta}_k\phi_k^2}\right)\nonumber\\
			=&E\widetilde{\theta}_{k-1}^2P_k^2P_{k-1}^{-2}+E\widetilde{\theta}_{k-1}^2(\beta_k-\hat{\beta}_k)O\left(\frac{1}{k}\right)\nonumber\\
			=&E\widetilde{\theta}_{k-1}^2P_k^2P_{k-1}^{-2}+o\left(\frac{1}{k^2}\right).
		\end{align}		
		According to Markov's Inequality, we find that 		
		\begin{align}\label{301}
			EI_{\{|\phi_k\hat{\theta}_{k-1}|>M\}}^2
			=&P\{|\phi_k\hat{\theta}_{k-1}|>M\}\nonumber\\
			\leq& P\{|\phi_k\widetilde{\theta}_{k-1}|>M-\bar{\phi}\bar{\theta}\}\nonumber\displaybreak\\
			\leq& \frac{1}{2}E(\phi_k\widetilde{\theta}_{k-1})^2\nonumber\\
			=&O\left(\frac{1}{k}\right).
		\end{align}		
		For the second term on RHS of (\ref{zuiyou1}), because of the boundedness of $\hat{\alpha}_k$, $\phi_k$, $f$, $E|\widetilde{\theta}_k|^t$ and $E|\hat{\theta}_k|^t$, we have 
		\begin{align}\label{high_2}
			&E\hat{P}_k^2\hat{\alpha}_k^2\phi_k^2\check{f}_k^2(\phi_k\widetilde{\theta}_{k-1}-\widetilde{z}_k\gamma_k)^2\nonumber\\
			\leq&O\left(\frac{1}{k^2}\right)E\left[(\widetilde{\theta}_{k-1}^2+|\hat{\theta}_{k-1}|^2)I_{\{|\phi_k\hat{\theta}_{k-1}|>M\}}\right]\nonumber\\
			\leq&O\left(\frac{1}{k^2}\right)\sqrt{E\left(\widetilde{\theta}_{k-1}^2+|\hat{\theta}_{k-1}|^2\right)^2}\sqrt{EI_{\{|\phi_k\hat{\theta}_{k-1}|>M\}}^2}\nonumber\\
			=&o\left(\frac{1}{k^2}\right).
		\end{align}
		For the third term on RHS of (\ref{zuiyou1}), we have 		
		\begin{align}\label{high_3_1}
			&F_k(1-F_k)\phi_k^2E\hat{P}_k^2\hat{\alpha}_k^2\gamma_k^2\nonumber\\
			=&F_k(1-F_k)\phi_k^2EP_k^2\alpha_k^2\gamma_k^2\nonumber\\
			&+F_k(1-F_k)\phi_k^2E(\hat{P}_k^2\hat{\alpha}_k^2-P_k^2\alpha_k^2)\gamma_k^2.
		\end{align}
		Based on Lyapunov inequality, (\ref{gamma_high}) and (\ref{E2m}), we know that
		\begin{align}\label{high_3_2}
			&F_k(1-F_k)\phi_k^2E(\hat{P}_k^2\hat{\alpha}_k^2-P_k^2\alpha_k^2)\gamma_k^2\nonumber\\
			=&F_k(1-F_k)O\left(\frac{1}{k^2}\right)E[(\hat{\alpha}_k-\alpha_k)(\hat{\alpha}_k+\alpha_k)\gamma_k^2]\nonumber\\
			=&o\left(\frac{1}{k^2}\right).
		\end{align}
		For the fourth term on RHS of (\ref{zuiyou1}), we get
		\begin{align}\label{zuiyou}
			&2E\widetilde{\theta}_{k-1}\hat{P}_{k}\hat{\alpha}_k\phi_k\check{f}_k(\phi_k\widetilde{\theta}_{k-1}-\widetilde{z}_k\gamma_k)\nonumber\\
			=&O\left(\frac{1}{k}\right)E\left[(\phi_k\widetilde{\theta}_{k-1}^2-\widetilde{\theta}_{k-1}\widetilde{z}_k\gamma_k)I_{\{|\phi_k\hat{\theta}_{k-1}|>M\}}\right]\nonumber\\
			\leq& O\left(\frac{1}{k}\right)\sqrt{E(\phi_k\widetilde{\theta}_{k-1}^2-\widetilde{\theta}_{k-1}\widetilde{z}_k\gamma_k)^2}\sqrt{EI_{\{|\phi_k\hat{\theta}_{k-1}|>M\}}^2}.
		\end{align}
		By $|\widetilde{z}_k|\leq2M$ and (\ref{gamma_high}), we have
		\begin{align}\label{302}
			&E(\phi_k\widetilde{\theta}_{k-1}^2-\widetilde{\theta}_{k-1}\widetilde{z}_k\gamma_k)^2\nonumber\\
			=&E\left[(\phi_k\widetilde{\theta}_{k-1}^2-\widetilde{\theta}_{k-1}\widetilde{z}_k\gamma_k)I_{\{|\phi_k\hat{\theta}_{k-1}|>M\}}\right]^2\nonumber\\
			=&o\left(\frac{1}{k}\right).
		\end{align}
		Substituting (\ref{301}) and (\ref{302}) into (\ref{zuiyou}), we have
		\begin{align}\label{high_4}
			2E\widetilde{\theta}_{k-1}\hat{P}_{k}\hat{\alpha}_k\phi_k\check{f}_k(\phi_k\widetilde{\theta}_{k-1}-\widetilde{z}_k\gamma_k)
			=o\left(\frac{1}{k^2}\right).
		\end{align} 
		Let's give a more precise estimate of $E\gamma_k^2$. From (\ref{E2m}), there exists $\bar{M}$, such that $E|\phi_k\hat{\theta}_{k-1}|^4
		\leq\bar{M}$. Hence 
		\begin{align}\label{gamma_precise}
			E\gamma_k^2\leq&1+E\left[|\phi_k\hat{\theta}_{k-1}|^2I_{\{|\phi_k\hat{\theta}_{k-1}|>M\}}\right]\nonumber\\
			\leq&1+\sqrt{\bar{M}}\sqrt{EI^2_{\{|\phi_k\hat{\theta}_{k-1}|>M\}}}\nonumber\\
			=&1+O\left(\frac{1}{k^{1/2}}\right).
		\end{align}
		According to Lemma \ref{lemmaefficiency} and the definition of $P_k$, we know that $\Delta_k=P_k=\left(\sum_{i=1}^{k}\phi_i^2\beta_i\right)^{-1}$. Then substituting (\ref{high_1}), (\ref{high_2})-(\ref{high_3_2}),  (\ref{high_4}) and (\ref{gamma_precise}) into (\ref{zuiyou1}), 
		it can be seen that
		\begin{align*}
			E\widetilde{\theta}_k^2\leq& E\widetilde{\theta}_{k-1}^2P_k^2P_{k-1}^{-2}+F_k(1-F_k)\phi_k^2P_k^2\alpha_k^2E\gamma_k^2+o\left(\frac{1}{k^2}\right)\\
			=&E\widetilde{\theta}_{k-1}^2P_k^2P_{k-1}^{-2}+\phi_k^2P_k^2\frac{f_k^2}{F_k(1-F_k)}+o\left(\frac{1}{k^2}\right)\\
			\leq&P_k^2P_0^{-2}E\widetilde{\theta}_0^2+P_k^2\sum_{i=1}^{k}\phi_i^2\frac{f_i^2}{F_i(1-F_i)}\\
			&+P_k^2\sum_{i=1}^{k}P_{i-1}^{-2}o(\frac{1}{(i-1)^2})+o\left(\frac{1}{k^2}\right)\\
			=&O\left(\frac{1}{k^2}\right)+P_k^2\sum_{i=1}^{k}\phi_i^2\beta_i+O\left(\frac{1}{k^2}\right)\sum_{i=1}^{k}o(1)+o\left(\frac{1}{k^2}\right)\\
			\sim& P_k^2\sum_{i=1}^{k}\phi_i^2\beta_{i}+o\left(\frac{1}{k}\right)\\
			\sim& \Delta_k+o\left(\frac{1}{k}\right).
		\end{align*}	
		Then we prove that   $\lim\limits_{k\to\infty}\Delta_k^{-1}(E\widetilde{\theta}_k^2-\Delta_k)=0$.
	\end{proof}

\section{Simulation}
In this section, we will illustrate the theoretical results with three examples. Example 1 verifies the convergence properties of the RPFI algorithm 
Example 2 verifies the asymptotically efficiency of IMPF algorithm and makes a comparison between the IMPF algorithm, the RPFI algorithm and the algorithm in \cite{Ke2022RecursiveIO}. Example 3 shows the asymptotically efficiency for high-order FIR systems and compare the computational complexity of the IMPF algorithm and the algorithm in \cite{10247592}, \cite{wang2023asymptotically} and \cite{Ke2022RecursiveIO}.

\textbf{Example 1:}
Consider a third-order FIR system 
\begin{align*}
	y_{k+1}=\phi_k^T\theta+d_{k+1},k\geq0,
\end{align*}
whose quantized output is $s_{k+1}=I_{\{y_{k+1}\leq C\}}$, 
where $C=0.8$ is a known constant threshold; the system noise $\{d_k,k\geq1\}$ satisfied Assumption \ref{A2}; the constant parameter $\theta=[0.1,0.5,0.9]^T$ is unknown, but the prior information $[-1,1]\times[-1,1]\times[-1,1]$ can be provided; the inputs are periodically generated by $\phi_{3k}=[2,0,1]$, $\phi_{3k+1}=[1,2,0]$, $\phi_{3k+2}=[0,1,2]$; the initial value $\hat{\theta}_0=[0.3,0.3,0.3]^T$.
\begin{figure}[H]
	\centering \includegraphics[width=0.75\columnwidth]{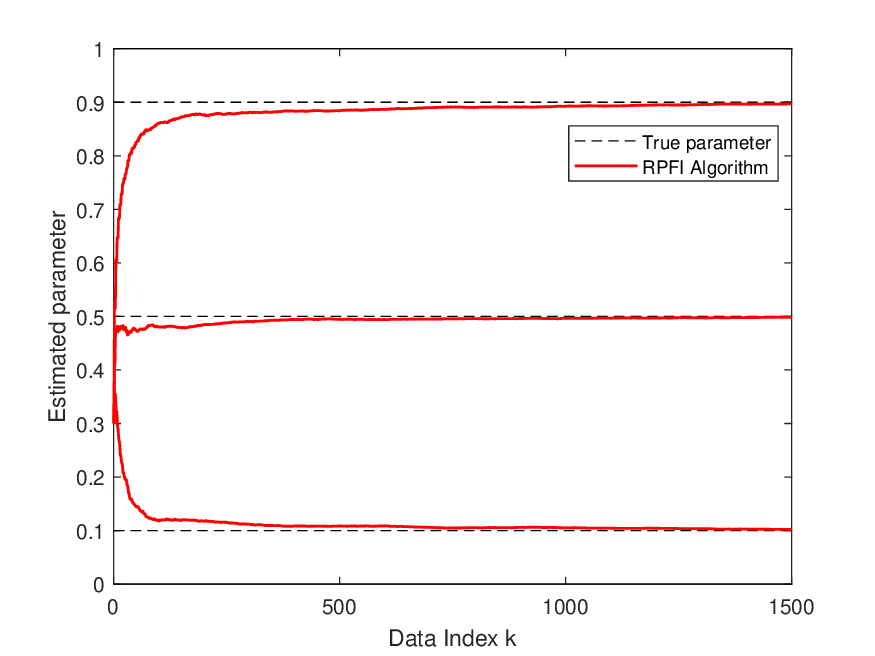}
	\caption{The convergence of the RPFI algorithm}
	\label{figalmostsurely}
\end{figure}
\begin{figure}[H]
	\centering \includegraphics[width=0.75\columnwidth]{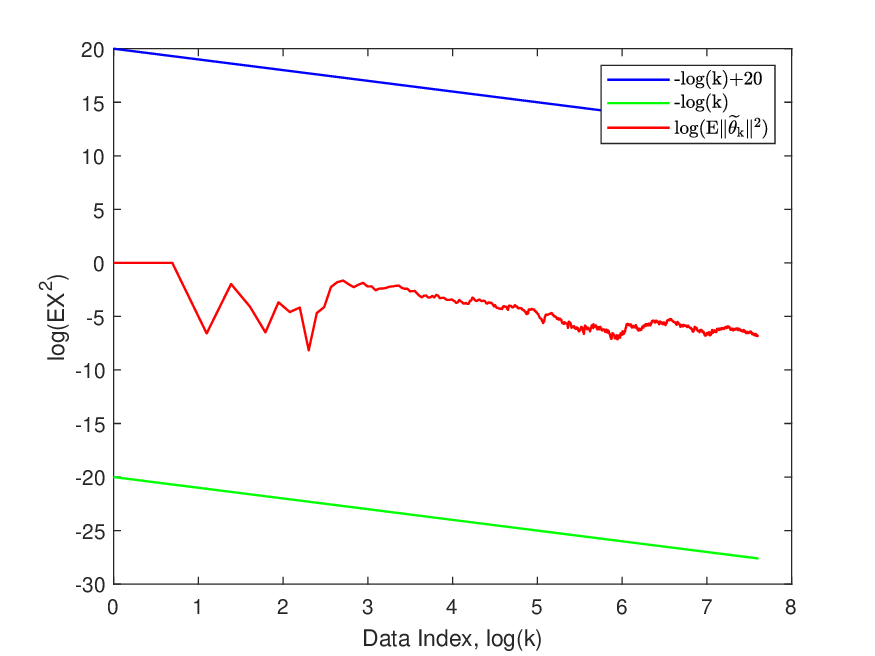}
	\caption{Convergence rate of the RPFI algorithm}
	\label{figspeed}
\end{figure}
Figure \ref{figalmostsurely} shows that the RPFI algorithm converges to the true parameter almost surely. Figure \ref{figspeed} shows that the RPFI algorithm has a mean square convergence rate of O($\frac{1}{k}$).


%
%
%

\textbf{Example 2:}  Consider a first-order FIR system 
\begin{align*}
	y_{k+1}=\phi_k\theta+d_{k+1},k\geq0,
\end{align*}
whose quantized output is $s_{k+1}=I_{\{y_{k+1}\leq C\}}$, 
where $C=4$ is a known constant threshold; the system noise $\{d_k,k\geq1\}$ satisfies Assumption \ref{A2}; the constant parameter $\theta=3$ is unknown, but the prior information $\theta\in[-5,5]$ can be provided; the inputs follow $|\phi_k|\leq3$ and $\phi_k$ is randomly generated in the interval $(1,3)$; and the initial value $\hat{\theta}_0=1$ and $P_0=1$.
\begin{figure}[H]
	\centering \includegraphics[width=0.75\columnwidth]{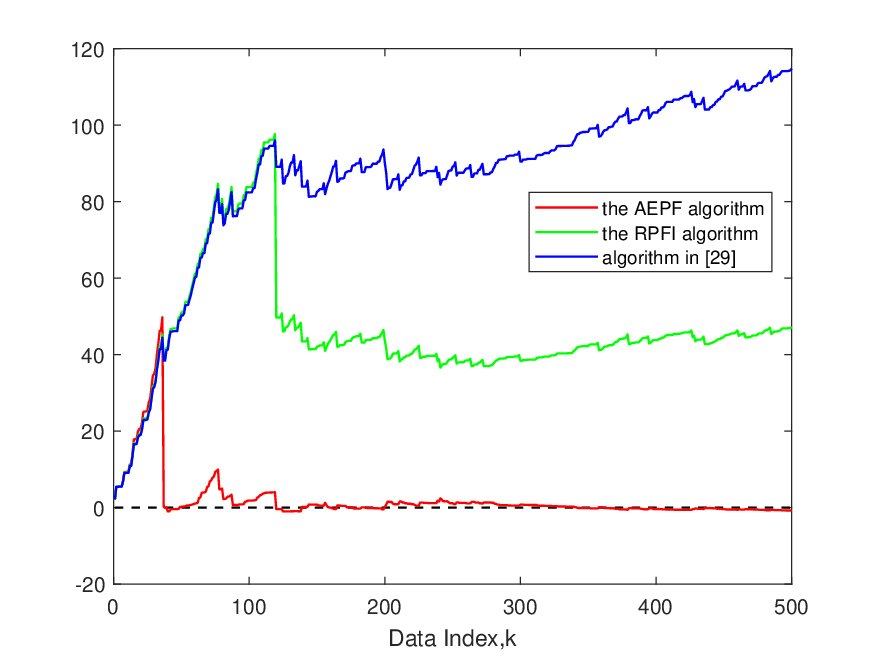}
	\caption{$\Delta_k^{-1}(E\widetilde{\theta}_k^2-\Delta_k)$ of the IMPF algorithm, RPFI algorithm and the algorithm in \cite{Ke2022RecursiveIO}}
	\label{figefficiency}
\end{figure}
Figure \ref{figefficiency} shows that $\Delta_k^{-1}(E\widetilde{\theta}_k^2-\Delta_k)$ of the IMPF algorithm converges to zero 
while the algorithm in \cite{Ke2022RecursiveIO} can not. This verifies that the IMPF algorithm is asymptotically efficient while the algorithm in \cite{Ke2022RecursiveIO} is not. What's more, it can be seen that although the RPFI algorithm is not asymptotically efficient, it performs better than the algorithm in \cite{Ke2022RecursiveIO}. When the data index reaches around 120, the curve of the RPFI algorithm shows a rapid decline. That's because the accelerated coefficient can quickly pull back the estimate when it deviates from prior information. Fully utilizing prior information is another advantage of RPFI algorithm and IMPF algorithm compared to the algorithm in \cite{Ke2022RecursiveIO}.

\textbf{Example 3:} Example 2 illustrates the asymptotic efficiency for first-order FIR system. Actually, this conclusion can be extended to high-order FIR system by simulations. Let's come back to the third-order FIR system in Example 1.
%
We use the IMPF algorithm to estimate the unknown parameter. We repeat the simulation 500 times with the same initial estimate to establish the empirical variance of the estimation errors representing the mean square errors. The result is shown below:
\begin{figure}[htb]
	\centering \includegraphics[width=0.75\columnwidth]{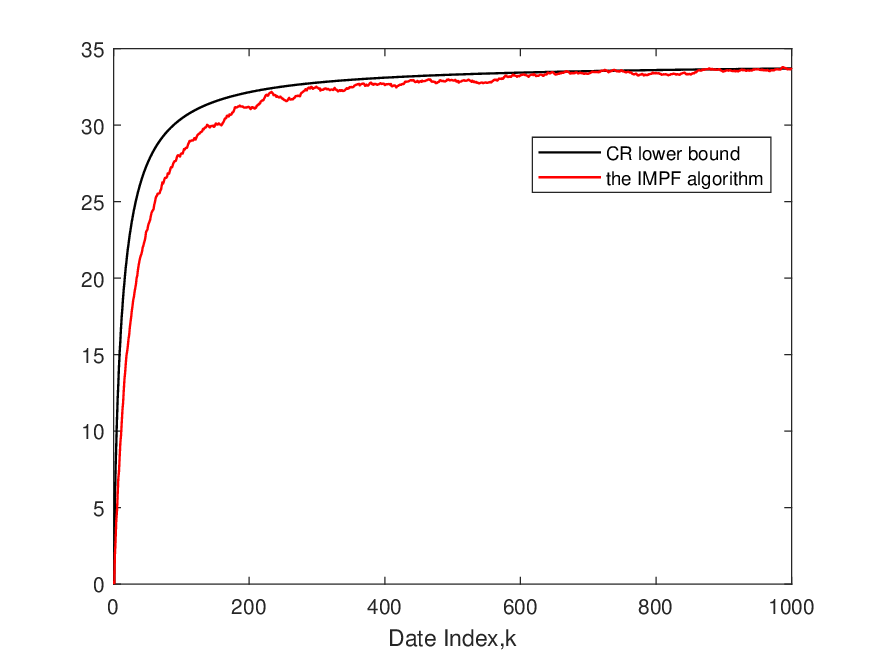}
	\caption{Comparison between the empirical variance of the IMPF algorithm and CR lower bound .}
	\label{figefficiency_high}
\end{figure}

Figure \ref{figefficiency_high} illustrates the comparison results between the empirical variance of estimation errors of the algorithm and the CR lower bound, which shows that the algorithm is asymptotically efficient. Next we will perform computational time comparison for the IMPF algorithm, the algorithm proposed in \cite{10247592}, the algorithms in \cite{wang2023asymptotically} and the algorithms in \cite{Ke2022RecursiveIO}. We evaluate the total computational time when the estimate error satisfies $E\Vert\widetilde{\theta}_k\Vert^2<10^{-4}$ for three times. 
%
\begin{table}[h]
	\centering
	\caption{\centering Computational time comparison for the IMPF algorithm \\and the algorithm in \cite{10247592}, \cite{wang2023asymptotically} and \cite{Ke2022RecursiveIO} (Time in seconds)
	}
	\begin{tabular}{l|l|l|l|l}
		\hline
		&First &Second &Third &Average\\
		\hline
		IMPF Algorithm &0.05&0.02&0.24&0.103\\  
		\hline
		Algorithm in \cite{10247592}& 65.38&45.34&93.02&67.913\\
		\hline
		Algorithm in \cite{wang2023asymptotically}& 34.85&28.39&28.73&30.657\\
		\hline
		Algorithm in \cite{Ke2022RecursiveIO}& 4.35&0.66&0.71&1.907\\
		\hline
	\end{tabular}
\end{table}

Although the algorithms in \cite{10247592} and \cite{wang2023asymptotically} are both asymptotically efficient, the time used by them are much more than the IMPF algorithm. It illustrates that the computational complexity of the IMPF algorithm has decreased a lot. Compared to the algorithm in \cite{Ke2022RecursiveIO}, the accelerated operator in IMPF algorithm can more fully utilize the prior information, so that the time used of the IMPF algorithm has decreased significantly as well.


\section{Concluding remarks}
This paper is concerned with parameter identification problem for FIR systems with binary-valued observations under low computational complexity. This paper constructed a RPFI algorithm, using a special cut-off coefficient to remove the projection operators so it reduce significantly the computational complexity. The algorithm is proved to be mean square and almost surely convergent. By designing an adaptive accelerated coefficient, the mean square convergence rate is proved to be $O\left(\frac{1}{k}\right)$, which is the same as that the system output is exactly known. Based on the RPFI algorithm and inspired by the structure of the $\mathrm{Cram\acute{e}r}$-Rao lower bound, an IMPF algorithm is constructed and it is proved to be asymptotically efficient for first-order FIR systems with proper weight coefficients.

There are many meaningful topics for future works, for example, how to give theoretical proof of the asymptotic efficiency of the proposed algorithm for the high-order FIR systems? Whether the projection-free algorithm is suitable for more complex systems with binary-valued observations, such as ARMAX systems?

\section{Appendix}
In order to analyze the convergence properties, we introduce four lemmas.
\begin{lemma} \label{lemmaconvergence}
	\cite{chen2005stochastic}Let $(v_k,\mathcal{F}_k)$ and $(w_k,\mathcal{F}_k)$ be two nonnegative adapted sequences. If $E(v_{k+1}|\mathcal{F}_k)\leq v_k+w_k$ and $E\sum_{i=1}^{\infty}w_k<\infty$, then $v_k$ converges a.s. to a finite limit.
\end{lemma}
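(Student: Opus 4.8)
The plan is to strip the accumulated increments $\sum w_i$ off $v_k$ so that what remains is a supermartingale, and then invoke Doob's supermartingale convergence theorem. The entire argument is driven by the single integrable object $S:=\sum_{i=1}^{\infty}w_i$: since $ES=\sum_{i=1}^{\infty}Ew_i<\infty$ by hypothesis, $S$ is finite almost surely, and it will simultaneously dominate the negative part of the auxiliary process and supply the almost sure convergence of the partial sums $\sum_{i=1}^{k-1}w_i$. Throughout I assume, as is standard for such sequences, that each $v_k$ is integrable; in the intended application $v_k=\Vert\widetilde\theta_k\Vert^2$ has uniformly bounded expectation, so this holds.

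First I would introduce the adapted process $M_k:=v_k-\sum_{i=1}^{k-1}w_i$. Using the defining inequality $E(v_{k+1}|\mathcal F_k)\leq v_k+w_k$ together with the $\mathcal F_k$-measurability of $w_1,\dots,w_k$, a one-line computation shows
$$E(M_{k+1}|\mathcal F_k)=E(v_{k+1}|\mathcal F_k)-\sum_{i=1}^{k}w_i\leq v_k+w_k-\sum_{i=1}^{k}w_i=M_k,$$
so that $(M_k,\mathcal F_k)$ is a supermartingale.

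The key step is to certify the integrability condition that licenses almost sure convergence. Since $v_k\geq0$ and every $w_i\geq0$, one has the pathwise lower bound $M_k\geq-\sum_{i=1}^{k-1}w_i\geq-S$, whence the negative part obeys $M_k^-\leq S$ and $\sup_k E[M_k^-]\leq ES<\infty$. By the supermartingale convergence theorem in this form (a supermartingale whose negative parts are bounded in $L^1$ converges a.s. to an integrable limit), $M_k$ converges almost surely to a finite limit $M_\infty$. Finally, $ES<\infty$ forces $S<\infty$ a.s., so the increasing partial sums $\sum_{i=1}^{k-1}w_i$ tend to the finite limit $S$ almost surely; therefore $v_k=M_k+\sum_{i=1}^{k-1}w_i\to M_\infty+S$, a finite limit, almost surely, which is the assertion.

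The main obstacle is precisely the integrability bookkeeping in the key step: one cannot bound $M_k$ below by a deterministic constant, and no individual $w_i$ controls $M_k^-$. The decisive observation is to dominate the negative part by the random but integrable tail $S$, so that the one-sided control $\sup_k E[M_k^-]<\infty$ — rather than two-sided $L^1$ boundedness or a deterministic bound — is what unlocks Doob's theorem; everything else is routine.
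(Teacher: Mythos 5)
The paper itself contains no proof of this lemma: it is quoted as a known result from \cite{chen2005stochastic}, so there is no in-paper argument to compare against. Your proof is correct, and it is the canonical argument for this Robbins--Siegmund-type statement: subtract the accumulated drift $\sum_{i=1}^{k-1}w_i$ to make $M_k=v_k-\sum_{i=1}^{k-1}w_i$ a supermartingale, dominate its negative part pathwise by the single integrable random variable $S=\sum_{i=1}^{\infty}w_i$ so that $\sup_k E[M_k^-]\leq ES<\infty$, invoke Doob's supermartingale convergence theorem, and add back the a.s.\ finite limit $S$ of the monotone partial sums. Each step checks out, and your identification of the one-sided $L^1$ bound via $S$ as the crux is exactly right. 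One bookkeeping remark: the integrability of $v_k$ that you assume up front can actually be derived rather than postulated, modulo assuming $Ev_1<\infty$ --- taking expectations in the drift inequality gives $Ev_{k+1}\leq Ev_k+Ew_k$, hence $Ev_k\leq Ev_1+ES<\infty$ for all $k$ by induction; some such hypothesis on $v_1$ is genuinely needed (it is implicit in the source, and also guarantees $EM_1<\infty$, which your $L^1$-boundedness step quietly uses via $E|M_k|\leq EM_1+2ES$), and it is harmless in this paper's application, where $v_k=\Vert\widetilde{\theta}_k\Vert^2$ has uniformly bounded expectation as shown in the proof of Theorem \ref{theorem1}.
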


\newtheorem{lemma22}{Lemma}[section]
\begin{lemma} \label{lemmabound}\cite{wang2023asymptotically}
	For any $a\in \mathbb{R}$ and $l\in \mathbb{Z}^+$, the following conclusions hold:
	1).The product of an infinite series satisfy the following equation:
	\begin{align*}
		\prod_{i=l+1}^{k}\left(1-\frac{a}{i}\right)=O\left(\left(\frac{l}{k}\right)^a\right),\text{ as }k\to\infty.
	\end{align*}
	2).For any $\delta>0$, we have 
	\begin{align*}
		\sum_{l=i}^{k-1}\prod_{i=l+1}^{k}\left(1-\frac{a}{i}\right)\frac{1}{l^{1+\delta}}=&
		\left\{
		\begin{aligned}
			&O\left(\frac{1}{k^\delta}\right), & \delta&<a;\\
			&O\left(\frac{\log k}{k^a}\right), & \delta&=a;\\
			&O\left(\frac{1}{k^a}\right), & \delta&>a.
		\end{aligned}
		\right.
	\end{align*}
\end{lemma}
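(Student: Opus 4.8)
The plan is to prove both parts by passing to logarithms for the product in part 1), and then reducing part 2) to a standard power-sum estimate via integral comparison.

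For part 1), I would take the logarithm of the product. Assume without loss of generality that $l$ is large enough that $1-a/i>0$ for all $i>l$; only finitely many initial indices are thereby excluded, and they contribute a bounded multiplicative constant that is absorbed into the $O$-notation. Then
$$\log\prod_{i=l+1}^{k}\left(1-\frac{a}{i}\right)=\sum_{i=l+1}^{k}\log\left(1-\frac{a}{i}\right).$$
Using the expansion $\log(1-x)=-x+O(x^2)$ for $|x|$ bounded away from $1$, each summand equals $-a/i+O(1/i^2)$. Summing, the harmonic-type sum gives $\sum_{i=l+1}^{k}1/i=\log(k/l)+O(1/l)$, while $\sum_{i=l+1}^{k}O(1/i^2)=O(1/l)=O(1)$ because the tail of $\sum 1/i^2$ converges. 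Hence the logarithm of the product equals $-a\log(k/l)+O(1)=a\log(l/k)+O(1)$, and exponentiating yields $\prod_{i=l+1}^{k}(1-a/i)=O((l/k)^a)$, establishing part 1).

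For part 2), I would substitute the bound from part 1) into each summand. Writing the product index as $j$ to avoid the index clash with the summation variable, part 1) supplies a constant $C$ with $\prod_{j=l+1}^{k}(1-a/j)\le C(l/k)^a$, so
$$\sum_{l=i}^{k-1}\prod_{j=l+1}^{k}\left(1-\frac{a}{j}\right)\frac{1}{l^{1+\delta}}\le\frac{C}{k^a}\sum_{l=i}^{k-1}l^{a-\delta-1}.$$
The behaviour of the remaining power sum $\sum_{l}l^{a-\delta-1}$ is governed by the sign of the exponent $a-\delta$, which I would determine by comparison with $\int x^{a-\delta-1}\,dx$. When $\delta<a$ the exponent exceeds $-1$ and the sum is of order $k^{a-\delta}$, giving $O(1/k^\delta)$; when $\delta=a$ the exponent is exactly $-1$ and the sum is of order $\log k$, giving $O(\log k/k^a)$; when $\delta>a$ the exponent falls below $-1$, the sum converges to a finite constant, and the prefactor $1/k^a$ alone survives, giving $O(1/k^a)$. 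These three cases reproduce exactly the claimed trichotomy.

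The Taylor estimate and the integral comparisons are routine; the main point requiring care is the uniform control of the error term in part 1). Specifically, I must ensure that the $O(1/i^2)$ remainder in $\log(1-a/i)$ is summable and contributes only a bounded constant independent of $k$, and I must absorb the finitely many indices near $i\approx a$ (where $1-a/i$ may be nonpositive when $a>0$) into the leading constant, which is legitimate since part 1) is an asymptotic statement in $k$ with $l$ fixed. A secondary simplification worth noting is that only the upper bound in part 1) is needed for part 2), so no lower bound on the product must be established; once this is recognized, the trichotomy follows directly from the elementary estimates for $\sum_{l}l^{a-\delta-1}$.
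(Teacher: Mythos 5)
The paper does not actually prove this lemma: it is quoted verbatim from \cite{wang2023asymptotically} and used as a black box, so there is no in-paper proof to compare against. Judged on its own merits, your argument is correct and is the standard one: the logarithmic expansion $\log(1-a/i)=-a/i+O(1/i^2)$ together with $H_k-H_l=\log(k/l)+O(1/l)$ gives part 1), and substituting that bound reduces part 2) to the elementary trichotomy for $\sum_{l} l^{a-\delta-1}$ via integral comparison; your observations that the product index must be renamed to avoid the statement's $i$-versus-$i$ clash, and that only the upper bound from part 1) is needed, are both apt. One point of phrasing deserves tightening: you justify absorbing the finitely many indices with $i\le a$ by saying part 1) is ``an asymptotic statement in $k$ with $l$ fixed,'' but in part 2) the bound is applied with $l$ ranging over $[i,k-1]$, so the constant $C$ must be uniform in $l$, not merely valid for each fixed $l$. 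Your argument does deliver this uniformity --- choosing $l_0>2|a|$ makes the $O(1/i^2)$ remainder constant depend only on $a$, the tail $\sum_{i>l_0}1/i^2$ is a fixed finite number, and the prefix factors $\prod_{j=l+1}^{l_0}|1-a/j|$ for the finitely many $l<l_0$ are bounded by a constant depending only on $a$ --- but the uniformity in $l$ should be stated as the thing being verified, since it is exactly what the application in part 2) consumes.
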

\begin{lemma} \label{lemmaefficiency} \cite{zhang2019asymptotically}
	For the FIR system (\ref{M}) with binary-valued observations (\ref{moxingtwo}), the CR lower bound of parameter estimate is 
	\begin{align*}
		\Delta_k=\left(\sum_{i=1}^{k}\frac{f_i^2}{F_i(1-F_i)}\phi_i\phi_i^T\right)^{-1},
	\end{align*}
	where $F_i=F(C-\phi_i\theta)$ and $f_i=f(C-\phi_i\theta)$ for $i=1,2,...,k.$
\end{lemma}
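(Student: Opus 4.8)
The plan is to derive the Cram\'er--Rao lower bound directly from the Fisher information of the binary-valued observations $s_1,\dots,s_k$, following the standard route: identify the likelihood, compute the score, take its second moment to obtain the Fisher information matrix, and invert it.

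First I would observe that under the model (\ref{M})--(\ref{moxingtwo}), since $s_i=I_{\{y_i\le C\}}=I_{\{d_i\le C-\phi_i^T\theta\}}$ and the $d_i$ are i.i.d.\ by Assumption \ref{A2}, each $s_i$ is an independent Bernoulli variable with $P(s_i=1)=F(C-\phi_i^T\theta)=F_i$. Hence the joint log-likelihood of the data is
\begin{align*}
	\ell_k(\theta)=\sum_{i=1}^{k}\left[s_i\log F_i+(1-s_i)\log(1-F_i)\right].
\end{align*}

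Next I would compute the score. Using $\nabla_\theta F_i=-f_i\phi_i$, where the negative sign comes from differentiating the argument $C-\phi_i^T\theta$, a short calculation collapses the two terms into
\begin{align*}
	\nabla_\theta\ell_k(\theta)=\sum_{i=1}^{k}\frac{f_i(F_i-s_i)}{F_i(1-F_i)}\phi_i.
\end{align*}
Then, using $E[s_i]=F_i$ and $\mathrm{Var}(s_i)=F_i(1-F_i)$ together with independence across $i$ (so that cross terms vanish), the Fisher information matrix becomes
\begin{align*}
	I_k(\theta)=E\left[\nabla_\theta\ell_k\,(\nabla_\theta\ell_k)^T\right]=\sum_{i=1}^{k}\frac{f_i^2}{F_i(1-F_i)}\phi_i\phi_i^T.
\end{align*}
The Cram\'er--Rao inequality then yields $\mathrm{Cov}(\hat\theta)\ge I_k(\theta)^{-1}=\Delta_k$ for every unbiased estimator, which is exactly the claimed bound.

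The score and variance computations are routine; the one point that needs care is verifying the regularity conditions under which the Cram\'er--Rao inequality applies, in particular that $I_k(\theta)$ is nonsingular so that $\Delta_k$ is well defined. I expect this to be the main thing to check rather than a genuine obstacle: smoothness of $F$ and $f$ from Assumption \ref{A2} justifies differentiating $\ell_k$ and interchanging expectation with differentiation, while positive definiteness of $I_k(\theta)$ follows from Assumption \ref{A3}. Indeed, the persistent-excitation condition $\frac{1}{N}\sum_{i}\phi_i\phi_i^T\ge\delta^2 I_n$ forces $\sum_{i=1}^{k}\phi_i\phi_i^T$ to grow in every direction, and since $f_i^2/(F_i(1-F_i))$ is bounded below by a positive constant on the relevant compact range (the same $\underline f>0$ argument used in Theorem \ref{theorem1}, combined with Assumption \ref{A1}), the matrix $I_k(\theta)$ is positive definite for $k\ge N$, so $\Delta_k$ exists.
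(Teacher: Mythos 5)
Your derivation is correct: the Bernoulli likelihood, the score $\sum_{i=1}^{k}\frac{f_i(F_i-s_i)}{F_i(1-F_i)}\phi_i$, and the Fisher information computation (with cross terms vanishing by independence of the $d_i$ under Assumption \ref{A2}, and expectation--differentiation interchange being trivial for a finite sample space) are all exactly right, as is the positive-definiteness check via Assumption \ref{A3} for $k\ge N$. Note that the paper itself offers no proof of this lemma --- it is quoted from \cite{zhang2019asymptotically} --- so your argument is the standard, canonical derivation that the cited reference relies on, and it fills the gap self-containedly.
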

\begin{lemma}\label{zenmelaide}
	Under Assumptions \ref{A1}-\ref{A3},  $\prod_{i=l}^{k}(1-P_i\beta_i\phi_i^2)= O\left(\frac{l}{k}\right)$, where $P_i$ and $\beta_i$ are defined as (\ref{Pbeta}).
	
	\begin{proof}  
		By (\ref{P_k}), we know that $P_k=O\left(\frac{1}{k}\right)$. So we have
		\begin{align*}
			\prod_{i=l}^{k}(1-P_i\beta_i\phi_i^2)=&\exp\left\{\sum_{i=l}^{k}\ln(1-P_i\beta_i\phi_i^2)\right\}\nonumber\\
			\leq&\exp\left\{\sum_{i=l}^{k}\ln\left[\left(P_i-\frac{P_i^2\beta_i\phi_i^2}{1+P_i\beta_i\phi_i^2}\right)\frac{1}{P_i}\right]\right\}\nonumber\\
			=&\exp\left\{\sum_{i=l}^{k}\ln\frac{P_{i+1}}{P_i}\right\}\nonumber\\
			=&\frac{P_{k+1}}{P_l}= O\left(\frac{l}{k}\right).
		\end{align*}
	\end{proof}
\end{lemma}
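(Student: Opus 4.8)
The plan is to exploit the recursive structure of $P_k$ in (\ref{Pbeta}) to collapse the product into a single ratio by telescoping. The excerpt already records the accumulated inverse $P_k^{-1}=\sum_{i=1}^{k}\beta_i\phi_i^2+P_0^{-1}$ (established just before (\ref{P_k})); differencing consecutive terms gives $\beta_i\phi_i^2=P_i^{-1}-P_{i-1}^{-1}$. Substituting this into the $i$-th factor yields the exact identity
\begin{align*}
1-P_i\beta_i\phi_i^2=1-P_i\left(P_i^{-1}-P_{i-1}^{-1}\right)=\frac{P_i}{P_{i-1}}.
\end{align*}
Since $P_{i-1}>0$, each factor equals $\left(1+\beta_i\phi_i^2 P_{i-1}\right)^{-1}\in(0,1)$, so no sign or positivity difficulty arises.

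With this identity in hand the product telescopes immediately:
\begin{align*}
\prod_{i=l}^{k}\left(1-P_i\beta_i\phi_i^2\right)=\prod_{i=l}^{k}\frac{P_i}{P_{i-1}}=\frac{P_k}{P_{l-1}}.
\end{align*}
It then suffices to control the two endpoints separately. The numerator is furnished directly by (\ref{P_k}), which already gives $P_k=O(1/k)$. For the denominator I would invoke the explicit accumulation $P_{l-1}^{-1}=\sum_{i=1}^{l-1}\beta_i\phi_i^2+P_0^{-1}$ together with the uniform upper bounds $\beta_i\le\bar\beta$ and $\phi_i^2\le\bar\phi^2$ (boundedness of the regressor in Assumption \ref{A3}) to get $P_{l-1}^{-1}\le(l-1)\bar\beta\bar\phi^2+P_0^{-1}=O(l)$. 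Multiplying, $P_k/P_{l-1}=P_k\cdot P_{l-1}^{-1}=O(1/k)\cdot O(l)=O(l/k)$, which is exactly the assertion.

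I do not expect a genuine obstacle: the entire argument hinges on recognizing the telescoping identity $1-P_i\beta_i\phi_i^2=P_i/P_{i-1}$, after which only the previously established order estimate (\ref{P_k}) and the elementary linear-growth bound on $P_{l-1}^{-1}$ are required. The only points deserving care are the index bookkeeping in the telescoping product and the legitimacy of the lower bound $P_k^{-1}\gtrsim k$ underlying (\ref{P_k}), which is where Assumption \ref{A3} enters. If one preferred to avoid the exact equality, the same ratio is obtained by bounding $\ln\!\left(1-P_i\beta_i\phi_i^2\right)$ and summing through an $\exp/\log$ telescoping, but the equality route is cleaner and sidesteps any inequality chasing.
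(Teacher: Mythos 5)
Your proof is correct and takes essentially the same route as the paper: both collapse the product to a ratio of consecutive $P$'s via the recursion in (\ref{Pbeta}) and then conclude from $P_k=O\left(\frac{1}{k}\right)$ together with the linear growth of $P_{l-1}^{-1}$. In fact your exact identity $1-P_i\beta_i\phi_i^2=P_i/P_{i-1}$ (from $\beta_i\phi_i^2=P_i^{-1}-P_{i-1}^{-1}$) is a slightly cleaner version of the paper's $\exp/\log$ telescoping, turning the paper's intermediate inequality $1-P_i\beta_i\phi_i^2\leq P_{i+1}/P_i$ into an equality with the indices aligned correctly.
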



\bibliographystyle{unsrt}	
\bibliography{reference.bib}

\end{document}